\documentclass[journal]{IEEEtran}
\usepackage{amsmath,amsfonts}
\usepackage{algorithmic}
\usepackage{algorithm}
\usepackage{array}
\usepackage[caption=false,font=normalsize,labelfont=sf,textfont=sf]{subfig}
\usepackage{textcomp}
\usepackage{stfloats}
\usepackage{url}
\usepackage{verbatim}
\usepackage{graphicx}
\usepackage{cite}
\hyphenation{op-tical net-works semi-conduc-tor IEEE-Xplore}

\usepackage{float} 
\usepackage[table,xcdraw]{xcolor} 
\usepackage{bbm} 
\usepackage{amssymb} 
\usepackage{amsfonts} 
\usepackage{tikz} 
\usepackage{amsthm} 
\ifCLASSOPTIONcompsoc 
\usepackage[caption=false,font=normalsize,labelfont=sf,textfont=sf]{subfig}
\else
\usepackage[caption=false,font=normalsize]{subfig}
\fi

\newtheorem{theorem}{Theorem}
\newtheorem{assumption}{Assumption}
\newtheorem{lemma}{Lemma}
\newtheorem{corollary}{Corollary}[lemma]

\newcommand\numberthis{\addtocounter{equation}{1}\tag{\theequation}} 

\newcounter{MYtempeqncnt} 
\def\CommentsColor{black}

\begin{document}

\title{Cellular Load Dependent Sleep Control for Energy Efficient HetNets with Non-Uniform User Distributions}

\author{Martin Willame,~\IEEEmembership{Member,~IEEE,} Charles Wiame,~\IEEEmembership{Member,~IEEE,} Jérôme Louveaux,~\IEEEmembership{Fellow,~IEEE,}\\ Claude Oestges,~\IEEEmembership{Fellow,~IEEE,} and Luc Vandendorpe,~\IEEEmembership{Fellow,~IEEE,}
\thanks{}} 

\markboth{}
{Willame \MakeLowercase{\textit{et al.}}: Cellular Load Dependent Power Control for Energy Efficient HetNets with Non-Uniform User Distributions}


\maketitle

\begin{abstract}
This study proposes a novel stochastic geometry framework analyzing power control strategies in spatially correlated network topologies. Heterogeneous networks are studied, with users modeled via the superposition of homogeneous and Poisson cluster processes. First, a new expression approaching the distribution of the number of users per base station is provided. This distribution defines the load associated with each Voronoï cell, capturing non-uniformities in user locations and correlation to BSs positions. \textcolor{\CommentsColor}{The power allocation is adjusted based on this load, allowing BSs to enter sleep mode when their activity falls below a defined threshold.} Furthermore, the propagation model features millimeter wave transmission characteristics and directional beamforming. Considering these aspects, revisited definitions of coverage probability, spectral efficiency, and energy efficiency are proposed. Tractable expressions for these metrics are derived and validated using Monte-Carlo simulations. \textcolor{\CommentsColor}{Asymptotic expressions are also proposed, providing further understanding on the influence of the system parameters.} Our numerical results finally analyze the impact of the sleep control on the performance and display the optimal strategies in terms of energy efficiency.
\end{abstract}

\begin{IEEEkeywords}
Sleep strategy, Power control, HetNet, Poisson cluster process, Energy efficiency, Millimeter wave, Stochastic geometry
\end{IEEEkeywords}

\section{Introduction}
\IEEEPARstart{T}{he} development of new cellular technologies raises questions about the potential increase in total energy consumption of mobile networks. This growing need for energy could lead to a higher carbon footprint, resulting in potential impacts on the environment \cite{6736745}. In order to address these aspects, a growing interest towards Globally Resource optimized Energy Efficient Network (GREEN) communications has been observed \cite{7939957}. This field of research focuses on resource allocation methods to optimize the Energy Efficiency (EE) of cellular networks. Since the most energy-intensive equipments in cellular mobile networks are Base Stations (BSs) \cite{6514953}, sleep strategies for BSs appear as an attractive way to reduce energy consumption \cite{7041163}. One possible manner to evaluate the benefits of these sleep techniques at a large scale level is Stochastic Geometry (SG). This mathematical field enables to study and characterize the properties of random spatial patterns \cite{PPP_GEN}. As studied in \cite{6042301,6620915}, locations of BSs and User Equipments (UEs) in actual networks can be shown to form such patterns. By abstracting these locations by points in the Euclidean space, SG provides a general framework for analyzing via closed form expressions the performance of cellular networks.

\textcolor{\CommentsColor}{The aim of this paper is to establish a framework for quantifying the EE benefits of sleep control strategies. To this end, closed-form expressions for the performance of cellular networks are derived using SG tools and analysed through simulations.}
\subsection{Related Works} 
Driven by the observation that the traffic demand varies in time and space, the use of on/off sleeping mode has recently been studied in \cite{6502479,9676414,6566900,7036048,6575091,6774407,7136461,7185349,7248320,8458429,6567876,7057551,7881136,7475871,7060678,8372969,8633858,6684701,7792582}. The characteristics and features of the system models analyzed in these works are summarized in Table \ref{tab:SotA}. 
Random Sleeping (RS) strategies have first been considered in the literature for their simplicity. RS is usually modeled as a Bernoulli trial, assuming that each BS is active with probability $q$ and sleeps with probability $1-q$. Reducing the density of active BSs enables to cut down the power consumption as well as the interference between neighboring BSs. However, this comes at a price of a lower average received power per user. As RS does not capture the global network state, network-aware sleeping strategies gained interest as well. Such Strategic Sleeping (SS) captures the impact of the load of each BS. The framework introduced in \cite{6502479} aims at studying the benefits of SS. BSs are asleep when their activity level is below a fixed threshold, and awake otherwise. The benefit of this sleep strategy has been illustrated and compared to a random strategy. This activity level can be defined in different ways, such as the downlink rate or the number of served users. This last definition is used in \cite{9676414} to examine the evolution of the coverage probability in Homogeneous Networks (HomNets). An approximated expression of this probability is also derived for Heterogeneous Networks (HetNets) in the same study.

HetNets model the densification of current networks with different types of cells. However, most works only consider independent BSs and UEs locations, which is not a realistic assumption for current capacity driven networks \cite{6171996}. In \cite{6684701}, Markov theory is used to derive traffic load dependent sleep modes in closed access networks. The UEs are distributed as a Matern core Poisson Cluster Process (PCP). The framework introduced in \cite{6171996} aims at studying Hetnets with non-uniform user distributions without sleeping capabilities. To the best of the authors' knowledge, no work in SG studies the benefit of SS using this more realistic network topology. 

In addition to inhomogeneous deployments, future mobile networks are expected to include massive multiple-input multiple-output (MIMO) and millimeter wave (mm-Wave) transmissions in order to meet the requirements beyond 5G \cite{7397887}. The authors of \cite{7913628} provide a framework to study the impact of beamforming and mm-Wave on the coverage probability without sleep modes. In \cite{7792582}, the performance of HetNets where SS and mm-Wave feature is analyzed.


\begin{table*}[t]
\renewcommand{\arraystretch}{1.3}
\caption{\textcolor{\CommentsColor}{Comparison of relevant literature with this paper}}
\label{tab:SotA}
\centering
\begin{tabular}{|c|c|c|c|c|c|c|}
\hline
\rowcolor[HTML]{C0C0C0} 
\textbf{Ref.}  & \textbf{\begin{tabular}[c]{@{}c@{}}Random \\ Sleeping\end{tabular}} & \textbf{\begin{tabular}[c]{@{}c@{}}Strategic \\ Sleeping\end{tabular}} & \textbf{HetNet} & \textbf{\begin{tabular}[c]{@{}c@{}}Non-Uniform\\ UEs\end{tabular}} &  \textbf{mm-Wave} & \textbf{Beamforming}\\ \hline
\cellcolor[HTML]{FFFFFF} \cite{6566900,7036048}  & \checkmark &  &  &   &  & \\ \hline
\rowcolor[HTML]{EFEFEF} \cite{6575091,6774407,7136461} & \checkmark  &  & \checkmark  &  &   & \\ \hline
\cellcolor[HTML]{FFFFFF} \cite{7185349} & \checkmark   &  & \checkmark   &  & & \checkmark \\ \hline
\rowcolor[HTML]{EFEFEF} \cite{7248320,8458429} &   & \checkmark  &     &   &  &  \\ \hline
\cellcolor[HTML]{FFFFFF} \cite{6567876,7057551,7881136,7475871,7060678,8372969,8633858,6502479,9676414} &   &  \checkmark  & \checkmark &    &   &    \\ \hline
\rowcolor[HTML]{EFEFEF} \cite{6684701}  &   &  \checkmark & \checkmark  & \checkmark  &  &  \\ \hline
\rowcolor[HTML]{FFFFFF} \cite{7792582}  &   &  \checkmark   & \checkmark  &   & \checkmark &  \\ \hline
\rowcolor[HTML]{EFEFEF} This work &   & \checkmark   & \checkmark  & \checkmark  & \checkmark   & \checkmark \\ \hline
\end{tabular}
\end{table*}

\subsection{Major Contributions}
On the basis of the aforementioned works, we provide a mathematical framework to study sleep strategies using SG. This framework includes advanced network features, which are summarized in the last line of Table \ref{tab:SotA}. The contributions associated to these features can be detailed as follows:
\begin{enumerate}
    \item Our system model extends the networks considered in prior SG works studying sleep control. This generalization is built up on three major axes.
    \begin{itemize}
            \item The topology of the network nodes is improved to encompass more realistic scenarios. In addition to modeling base stations using a multi-tier architecture, users locations are sampled from non-uniform distributions. The resulting framework enables to model hotspots (PCPs of users around BSs), areas with a temporarily high activity at a given instant (independent PCPs of users), as well as areas with a low traffic activity (independent Homogeneous Poisson Point Processes (HPPPs)).
            \item mm-Wave transmissions and directional beamforming are included in the propagation model. Nakagami fading is considered since it captures poor scattering environments with better accuracy than Rayleigh distributions \cite{7593259}. Regarding the beamforming, this study capitalizes on the efficient approximation used in \cite{7913628} to model the antenna pattern. This approximation has been shown to provide better trade-offs in terms of tractability and accuracy compared to the flat-top model employed in most existing works \cite{7308984,7448962,6932503,7105406}.
            \item \textcolor{\CommentsColor}{A network aware sleep strategy is studied. This strategy is based on the user activity within each Voronoï cell. A probabilistic model is defined as function of a general random variable representing this activity and determines the mode, asleep or active, selected by each base station.} 
\end{itemize}       
 To the best of the authors' knowledge, this work is the first SG analysis dedicated to sleep control considering these aspects.
    \item A general and tractable framework including the above features is presented. This mathematical framework also introduces new metrics related to the power strategies. 
        \begin{itemize}
            \item A novel expression is provided for the distribution of the number of users in Voronoï cells. This distribution captures the simultaneous presence of users associated to both HPPPs and PCPs. At the moment of writing this study, such a two-fold load characterization has never been considered. This load is then used to represent the cellular activity and employed as a criterion to select the sleep mode. 
            \item We introduce the Activity Averaged K-tier Coverage Probability (AAKCP) to characterize the Quality-of-Service (QoS). This probability characterizes the global network coverage, averaged over the different sleep modes and user categories. On the basis of this metric, revisited definitions of the spectral and energy efficiencies are also provided.
            \item \textcolor{\CommentsColor}{In addition to the complete expressions, some asymptotic expressions have been derived to drive more insights into the impact of the system parameters.}
            \item The developed framework is validated by means of extensive Monte Carlo (MC) simulations. The numerical results enable to gain insight into the impact on the performance of the activity thresholds defining the different power levels.
        \end{itemize}
\end{enumerate}

\textcolor{\CommentsColor}{
Please note that unlike Table I may suggests, this study does not only merge the findings from the relevant literature. Instead, the full mathematical expression of the AKKCP metric is developed from scratch. Thus, it is not just an extension of the results presented in the cited papers.}
\subsection{Mathematical Background}
\label{sec:math_back}
In SG studies, nodes distributions with independent locations are traditionally analyzed by means of HPPPs. Beyond these baseline processes, advanced patterns featuring attraction behaviors between nodes can also be modeled using cluster processes, such as the PCP. For $\lambda_p, \overline{m} \in \mathbb{R}$ and function $f$, a PCP $\Phi$ can be defined as follows. First, a parent Point Process (PP) $\Phi_p$ is modeled by mean of an HPPP of intensity $\lambda_p$. Each parent point $\mathbf{z} \in \Phi_p$ form a cluster center and is associated with a daughter PP $\Phi_{d,\mathbf{z}}$ \cite{8187697}. For each daughter PP, the random counting measure\footnote{The random counting measure is defined as a function $\Psi(\cdot)$ that takes a subset $A \subset \mathbb{R}^2$ as input and returns the number of points of $\Phi$ within this area. By denoting the indicator function as $\mathbbm{1}(\cdot)$, the random counting measure can be mathematically defined as follows:
\begin{equation}
\label{eq:chap1_rcm}
        \Psi : \mathbb{R}^2  \to \mathbb{N} :  A \to \sum_{\mathbf{x} \in \Phi} \mathbbm{1}(\mathbf{x} \in A).
\end{equation}} $\Psi$ is assumed to be given by an independent Poisson random variable of parameter $\overline{m}$.
The positions of the daughter points with respect to their cluster center $\mathbf{z}$ are given by independent and identically distributed random vectors $\mathbf{s} \in \Phi_{d,\mathbf{z}}$. The PCP is given by $\Phi = \bigcup_{\mathbf{z} \in \Phi_p} \mathbf{z} + \Phi_{d,\mathbf{z}}$. One will denote, by $f(\mathbf{s})$, the probability density function (PDF) of the vectors $\mathbf{s}$. Two different categories of PCPs can be defined depending on this distribution, namely Thomas Cluster Processes (TCPs) and Matérn Cluster Processes (MCP). \color{\CommentsColor}In this paper, we only consider MCPs but the framework can be directly extended to TCPs. A PCP $\Phi(\lambda_p,\overline{m},f)$ in $\mathbb{R}^2$ is an MCP if its daughter points $\mathbf{s} \in \Phi_{d,\mathbf{z}}$ are uniformly distributed within a disk of radius $r_{m}$ centered at $\mathbf{0}$, which is denoted as $b(\mathbf{0},r_m)$. $f$ can in that case be expressed as $f(\mathbf{s}) = \frac{1}{\pi r_{m}^2}, 0 \leq \lVert \mathbf{s} \rVert \leq r_{m}.$\color{black}

\subsection{Structure of the Paper}
The paper is structured as follows. The system model is presented in Section \ref{sec:syst_mod}. The analytical results using SG are presented in Section \ref{sec:analy_res}. These analytical findings are validated in Section \ref{sec:simu} by means of MC simulations. The impact of the sleep control on the EE is studied in the same section. Finally, Section \ref{sec:conclu} concludes this work.
\section{System Model}
\label{sec:syst_mod}
This section is divided in three parts. First, the network topology is described. Then, the policy for the sleep control is presented. Finally, the channel model is defined.
\subsection{Network Model}
\label{sec:net_mod}
Base stations are here modeled using a HetNet consisting in $K^\text{BS}$ tiers. Let $\mathcal{K}^\text{BS}$ be the set of indices of these tiers. The subset of tiers modelling the base network independent of the UEs locations is denoted as $\mathcal{K}_\textup{B}^\text{BS} \subseteq \mathcal{K}^\text{BS}$. While $\mathcal{K}_\textup{H}^\text{BS} \subseteq \mathcal{K}^\text{BS}$ denotes the subset of hotspots dependent on UEs locations. Each tier $k\in \mathcal{K}^\text{BS}$ is defined as an HPPP $\Phi_k^\text{BS}$ of density $\lambda_k^\text{BS} > 0$. The BSs of this tier $k$ transmit with maximum power $P_k > 0$ and are equipped with a Uniform Linear Array (ULA) of $M_k$ antennas. Universal frequency reuse is considered such that the BSs interfere with each other. A nearest neighbor open access association scheme is used such that each UE is connected to its nearest non-sleeping BS. This notion of sleep mode is discussed in Subsection \ref{sec:PowerControl}.

To model a non-uniform distribution, $K^\text{UE}$ tiers of UEs are considered. 
Let $\mathcal{K}^\text{UE}$ be the set of indices of these tiers. Each tier $u\in\mathcal{K}^\text{UE}$ belongs to one of the three categories considered in this paper:
\begin{itemize}
    \item \textbf{Category 1:} The objective of this category is to model locations with a temporary high user activity, independent of the BS locations. The UEs are distributed as \textcolor{\CommentsColor}{an MCP} $\Phi_u^\text{UE}$ with a parent PP independent of the BS locations. The intensity of the parent PP is $\lambda_{p,u}^\text{UE}$. The mean number of UEs per cluster is given by $\overline{m}_u$. The density of UEs is $d_u = \overline{m}_u \lambda_{p,u}^\text{UE}$ and the corresponding \textcolor{\CommentsColor}{radius is denoted $r_{m,u}$.}
    \item \textbf{Category 2:} The role of this category is to model hotspots: users gathered around fixed BSs. For each tier $k \in \mathcal{K}_\text{H}^\text{BS}$, \textcolor{\CommentsColor}{an MCP} $\Phi_{u,k}^\text{UE}$ of parent PP $\Phi_k^\text{BS}$ is defined. The intensity of the parent PP is thus $\lambda_{p,u}^\text{UE}=\lambda_k^\text{BS}$. The other parameters are the same as for category 1 and can take any value.
    \item \textbf{Category 3:} This category models uniformly distributed UEs. The UEs are distributed as an independent HPPP $\Phi_u^\text{UE}$ of density $d_u = \lambda_u^\text{UE} > 0$. 
\end{itemize} 
\begin{figure*}[!t]
\centering
\subfloat[]{\includegraphics[width=0.48\textwidth]{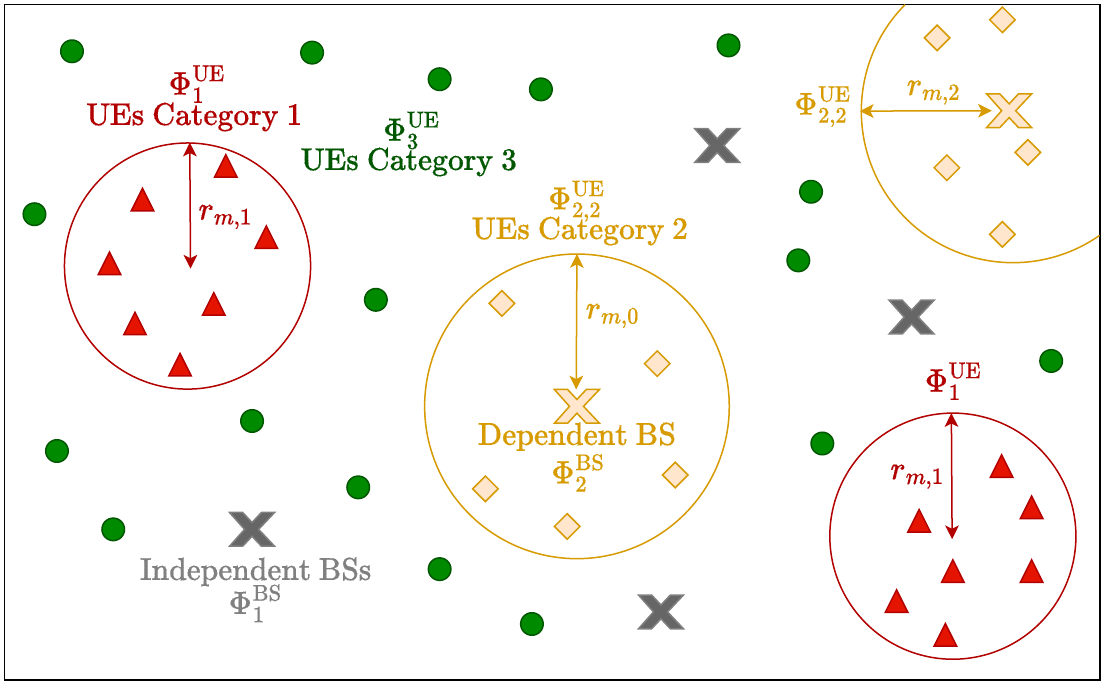}}
\label{fig:first_case}
\hfill
\subfloat[]{\includegraphics[width=0.48\textwidth]{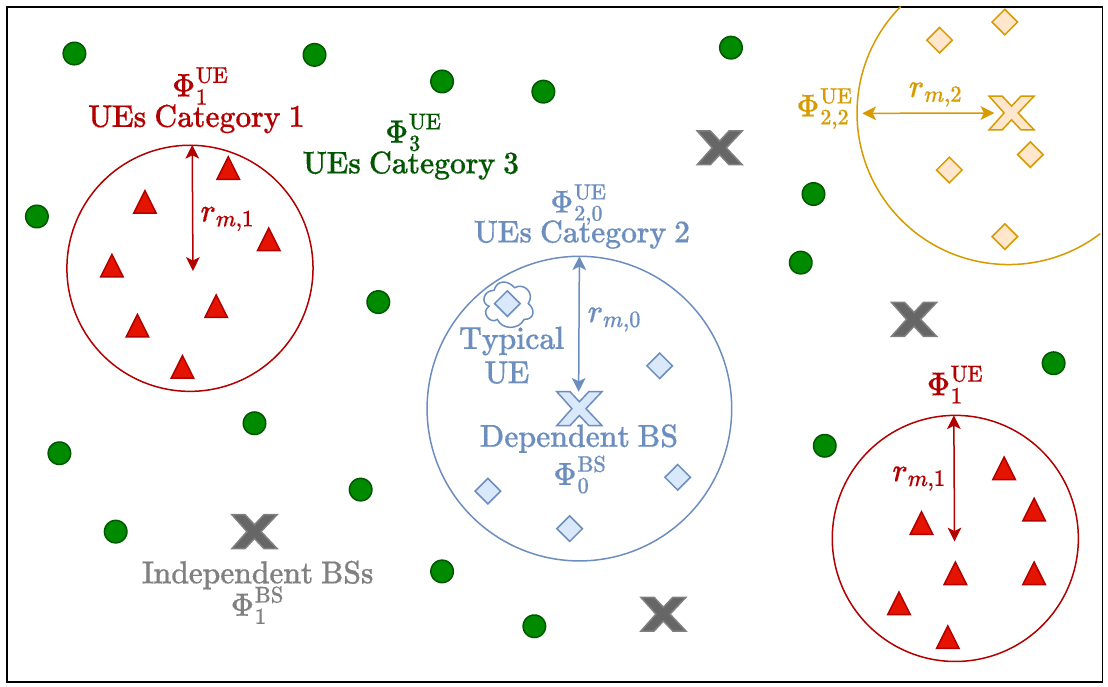}}
\label{fig:second_case}
\caption{\textcolor{\CommentsColor}{Sample of the network's topology. (a) network realization with the three categories of UEs' tiers and the two types of BSs tiers are represented. (b) illustration of the tier $0$ when the typical UE is from a tier of category 2.}}
\label{fig:topo}
\end{figure*}
The set of tiers belonging to each category is denoted by $\mathcal{K}_\textup{U}^\textup{UE}$ with $\textup{U} \in \{1,2,3\}$. \figurename \ref{fig:topo} illustrates one possible realization of the network. The analysis of the performance of the network is done for a randomly selected UE (called the typical UE) among one of the $K^\text{UE}$ tiers of UEs. Without loss of generality, the typical UE is located at the origin. Let us define $i \in \hat{I}_k$ as the index of the $i^{th}$ closest BS to the typical UE and $\hat{I}_k$ the set of those indices. When the coverage is studied for a typical UE from a tier $u$ of category 2, one BS is present at the center of its representative cluster. Let $\mathbf{y} \in \Phi_k^\text{BS}$ be the position of this BS. To simplify the notation, another tier $\Phi_0^\text{BS}$, which consists only of this single point $\mathbf{y}$, is defined as in \cite{7511509}. Using Slivnyak's Theorem, the thinned process $\Phi_k^\text{BS} \setminus \{\mathbf{y}\}$ has the same distribution as $\Phi_k^\text{BS}$. The tier referred to by index $k$ will thus be the thinned one $\Phi_k^\text{BS} \setminus \{\mathbf{y}\}$. The parameters of the $0^{th}$ tier are derived from $\Phi_{u,k}^\text{UE}$ and $\Phi_k^\text{BS}$ such that $P_0=P_k$, $M_{0} = M_{k}$ and \textcolor{\CommentsColor}{$r_{m,0} = r_{m,u}$}. A new set of indices is defined as $\mathcal{K}_0^\text{BS} = \{0\} \bigcup \mathcal{K^\text{BS}}$. For the sake of coherence, when the coverage is studied for a typical UE from a tier of category 1 or 3, $\Phi_0^\text{BS} = \emptyset$. 

\subsection{Power Control Policy}
\label{sec:PowerControl}
Base stations are able to \textcolor{\CommentsColor} {turn on and off} according to the activity in their cells. The activity is thus a measure of the workload of a BS at a given time. Let $\mathcal{A}_{k,i}$ be the activity of the $i^{th}$ closest BS of tier $k$ to the typical user. In this work, this quantity is defined as a discrete random variable in the set $\mathbb{A}$, representing the number of UEs within the Voronoï region of a BS. Note that all results can be generalised to another definition of the activity (e.g. defined as a continuous variable).

For each BS tier $k \in \mathcal{K}^\text{BS}$,\color{\CommentsColor} the power control level is defined as a step function $s_k:\mathbb{A} \to \textcolor{\CommentsColor}{\{0,1\}}: a \to s_k(a)$ defined as:
\begin{equation}
    s_k(a)=\begin{cases}
  \multicolumn{1}{@{}c@{\quad}}{0} & \text{ for } a < \mu_{k}, \\ 
  \multicolumn{1}{@{}c@{\quad}}{1} & \text{ for } a \geq \mu_{k}.
\end{cases}
\end{equation}

Ratios are defined because the thresholds $\mu_{k}$ lose their interpretation when one changes the density of BS and/or UEs. Therefore, the corresponding ratio of BSs in active mode is denoted as $q_{k}$. When the same ratio $q_{k}$ is used for every tier of the network, it is denoted as $q$. However, it does not lead to the same thresholds $\mu_{k}$ as all tiers of BSs might have different PMFs characterizing their load.
\color{black}

In order to study the benefit of sleep control, a power consumption model must be defined. For a BS of tier $k$ in sleep mode, the consumed power is defined as $P_{\text{sleep},k}$. For active BSs, one can define the static, antenna dependent and load dependent consumed powers, respectively denoted as \textcolor{\CommentsColor}{$P_{\text{stat},k}$, $M_{k}P_a$ and $\Delta_p P_{k}$} \cite{9013130}. In the last term, $\Delta_p$ is the slope of the load dependent power consumption. For BSs with sleeping capabilities $P_{\text{stat},k} > P_{\text{sleep},k}$. The average power consumption $P_{c,k}$ per BS of tier $k$ is given by 
\color{\CommentsColor}
\begin{equation} \label{eq:Power_consumed}
    P_{c,k} = (1-q_{k}) P_{\text{sleep},k} + q_{k}(P_{\text{stat},k} + M_{k} P_a + \Delta_p P_{k}) ~[\text{W}]
\end{equation}
\color{black}
The average network power consumption per surface unit $P_\text{net}$ is thus obtained as follows \cite{7377022}: 
\begin{equation}
    P_\text{net} = \sum_{k \in \mathcal{K}^\text{BS}} \lambda_k^\text{BS} P_{c,k} ~\left[\frac{\text{W}}{\text{m}^2}\right].
\end{equation}

\subsection{Channel Model}
\begin{figure}
\centering
\includegraphics[width=0.49 \textwidth]{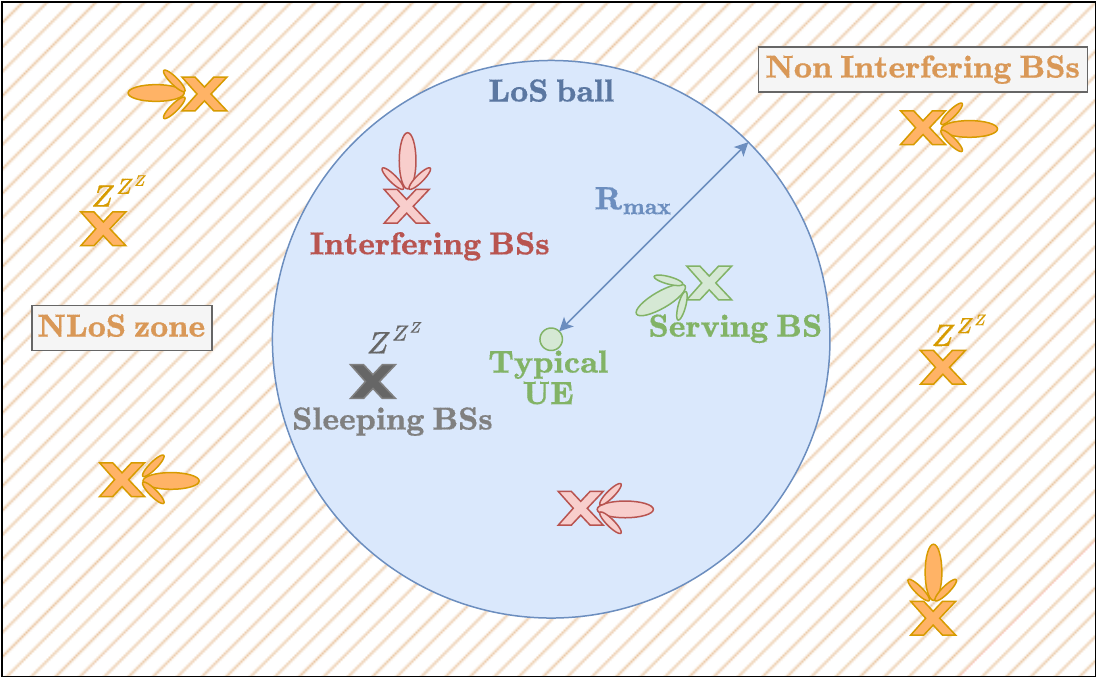}
\caption{\textcolor{\CommentsColor}{Representation of the channel model.}}
\label{fig:chan_mod}
\end{figure}
A mm-Wave propagation model is considered. In this work, only Line-of-Sight (LoS) transmissions are encountered, since non-LoS contributions have been shown to be negligible for mm-Waves \cite{7448962}. To focus the analysis on LoS transmissions to a typical UE located at the origin, the LoS ball blockage model is used \cite{6932503}. The transmissions from BSs located within a ball of radius $R_\text{max}$ are considered as LoS, while the ones from BSs outside this ball are fully neglected. \textcolor{\CommentsColor}{\figurename \ref{fig:chan_mod} depicts these considerations.} For a serving BS of tier $j \in \mathcal{K}_0^\text{BS}$, located at $\mathbf{x}_j$, the received signal at the typical UE is given by
\color{\CommentsColor}
\begin{align*}
y = & \sqrt{\beta \textcolor{\CommentsColor}{P_{j}}} \lVert \mathbf{x}_j \rVert^{-\frac{\alpha}{2}} \mathbf{h}_{\mathbf{x}_j}^\textup{H} \mathbf{w}_{\mathbf{x}_j} s_{\mathbf{x}_j} + n_0\\
    & + \sum_{k\in\mathcal{K}_0} \sum_{\substack{\mathbf{x}_k \in \\ \Tilde{\Phi}_{k}^\text{BS}(\mathbf{x}_j)}} \sqrt{\beta \textcolor{\CommentsColor}{P_{k}}} \lVert\mathbf{x}_k \rVert^{-\frac{\alpha}{2}} \mathbf{h}_{\mathbf{x}_k}^\textup{H} \mathbf{w}_{\mathbf{x}_k } s_{\mathbf{x}_k}, \numberthis
\end{align*}
where $\mathbf{h}_{\mathbf{x}_k}, \mathbf{w}_{\mathbf{x}_k} \in \mathcal{C}^{M_k \times 1}$ \color{black} are respectively the channel and beamforming vectors between a transmitting BS of tier $k\in \mathcal{K}_0^\textup{BS}$ located at $\mathbf{x}_k$ and the typical UE. The Additive White Gaussian Noise (AWGN) of variance $\sigma_\textup{noise}^2$ is represented by $n_0$. The transmitted signal is denoted as $s_{\mathbf{x}_k}$. The path loss intercept and exponent are respectively $\beta$ and $\alpha$ and a circular excluding region of radius $R_\text{min}$ is defined to circumvent singularities. Potential BSs in these regions are hence ignored. \textcolor{\CommentsColor}{The set of interfering BSs of tier $k$ is $\Tilde{\Phi}_{k}^\text{BS}(\mathbf{x}_j)$ obtained after a thinning operation of probability $q_k$} . As in \cite{1146527}, the effect of the channel is characterized by \textcolor{\CommentsColor}{$\mathbf{h}_{\mathbf{x}_k} = \sqrt{M_k} \rho_{\mathbf{x}_k} \mathbf{a}_t(\psi_{\mathbf{x}_k})$}, where $\rho_{\mathbf{x}_k}$ is the complex small-scale fading gain and $\mathbf{a}_t(\psi_{\mathbf{x}_k})$ is the transmit array vector that can be written as
\begin{equation}
\label{eq:chap4_at}
        \mathbf{a}_t(\psi_{\mathbf{x}_k}) = \frac{1}{\sqrt{M_k}} \left[1, \cdots, e^{j2\pi l \psi_{\mathbf{x}_k}}, \cdots, e^{j2\pi (M_k-1) \psi_{\mathbf{x}_k}}  \right]^T,
\end{equation}
where $\psi_{\mathbf{x}_k} = \frac{d}{\upsilon} \cos \phi_{\mathbf{x}_k}$, the angle of departure from the BS to the served UE is $\phi_{\mathbf{x}_k}$, the distance between antennas in the ULA is $d$ and $\upsilon$ is the wavelength. The beamforming vector is given by $\mathbf{w}_{\mathbf{x}_k} = \mathbf{a}_t(\varphi_{\mathbf{x}_k})$, where $\varphi_{\mathbf{x}_k}$ is a variable depending on the direction of the beam. In this paper, Nakagami-$m$ fading is considered for $\lvert \rho_{\mathbf{x}_k} \rvert$, as in \cite{6932503}. The power gain corresponding to the channel effect and the beamforming is given by:
\begin{equation}
\label{eq:gain}
    |\mathbf{h}_{\mathbf{x}_k} \mathbf{w}_{\mathbf{x}_k} |^2 = M_k |\rho_{\mathbf{x}_k}|^2 |\mathbf{a}_t^{H}(\psi_{\mathbf{x}_k})  \mathbf{a}_t(\varphi_{\mathbf{x}_k})|^2,
\end{equation}
where $|\rho_{\mathbf{x}_k}|^2$ is the power gain of the small-scale fading. In the case of Nakagami-$m$ fading, it follows a gamma distribution denoted as Gamma($m,\frac{1}{m}$). To maximize the received power, the BS serving the typical UE aligns its beam such that $\varphi_{\mathbf{x}_k}=\psi_{\mathbf{x}_k}$. For the interfering BSs, we assume that $\psi_{\mathbf{x}_k},\varphi_{\mathbf{x}_k}$  are uniformly distributed over $[-\frac{d}{\upsilon},\frac{d}{\upsilon}]$. The array gain $|\mathbf{a}_t^{H}(\psi_{\mathbf{x}_k})  \mathbf{a}_t(\varphi_{\mathbf{x}_k})|^2$ in \eqref{eq:gain} can be expressed as a kernel $G_{act}(\psi_{\mathbf{x}_k}-\varphi_{\mathbf{x}_k})$ with $G_\text{act}(x) \triangleq \frac{\sin^2(\pi M_k x)}{M_k^2 \sin^2(\pi x)}$. In \cite[Appendix A]{7279196}, it has been shown that for $\frac{d}{\upsilon}=\frac{1}{2}$, the term $\psi_{\mathbf{x}_k}-\varphi_{\mathbf{x}_k}$ in the expression of the kernel can equivalently be replaced by $\frac{d}{\upsilon} \theta_{\mathbf{x}_k}$, where $\theta_{\mathbf{x}_k} \sim \text{Unif}[-1,1]$. This leads to the first assumption of this work.
\begin{assumption}
\label{assum: d/lambda}
The BSs are equipped with half-wavelength antennas in ULA form.
\end{assumption}

The actual kernel has to be approximated as it can not always lead to a closed form expression with SG analysis. In \cite{7913628}, a truncated cosine kernel is shown to provide a trade-off between tractability and accuracy. For this reason, this kernel is used as an approximation in our framework. 
\begin{assumption}
The actual beamforming kernel $G_\text{act}(x)$ is approximated by a cosine kernel defined as
\begin{equation}
    \label{eq:chap4_cospat}
        G_\text{cos}(x) = \begin{cases}
\multicolumn{1}{@{}c@{\quad}}{\cos^2\left(\frac{\pi M_k}{2}x \right)}           & \text{ if } |x| \leq \frac{1}{M_k}, \\ 
          \multicolumn{1}{@{}c@{\quad}}{0} & \text{ otherwise}.
        \end{cases}
\end{equation}
\end{assumption}
The channel gain term including both the small-scale fading gain and the directional antenna array gain is then defined as $g_{\mathbf{x}_k}=M_k|\rho_{\mathbf{x}_k}|^2 G_\text{cos}(\frac{d}{\lambda} \theta_{\mathbf{x}_k})$.

\section{Analytical Results}
\label{sec:analy_res}
\begin{figure*}[!t]
    \centering
    \includegraphics[width=0.7\textwidth]{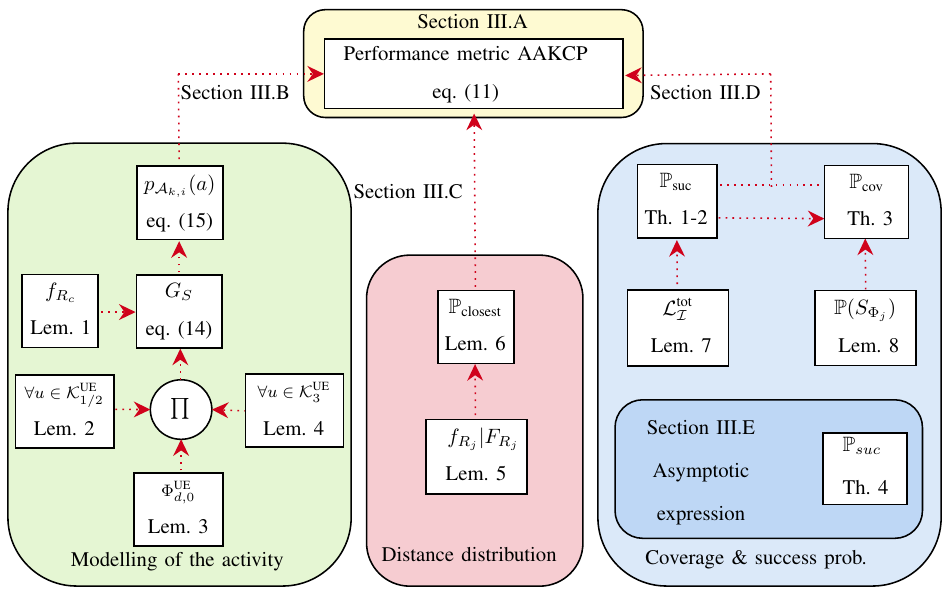}
    \caption{\textcolor{\CommentsColor}{Summary of the analytical findings.}}
    \label{fig:diagram_AAKCP}
\end{figure*}
The structure of this section is summarized in \figurename \ref{fig:diagram_AAKCP}. In Subsection \ref{sec:metric}, a revisited definition of the coverage probability is introduced on the basis of the system model. This new performance metric is the Activity Averaged K-tier Coverage Probability. To evaluate this metric, the analysis is divided into three parts: 
\begin{itemize}
    \item Subsection \ref{sec:activity} contains the statistical characterization of the cellular load. To model the PMF of the activity $p_{\mathcal{A}_{k,i}}(a)$, the Probability Generating Function (PGF) of the load $G_S$ must be evaluated. This analytical development relies on the assumption of circular Voronoï region detailed in the subsection. Its radius PDF is denoted as $f_{R_c}$.
    \item Subsection \ref{sec:distances} provides an expression for the distribution of the distance to the typical UE. \textcolor{\CommentsColor}{The PDF $f_{R_{j}}$ and the Complementary Cumulative Distribution Function (CCDF) $F_{R_{j}}$ of the distance to the closest closest BS of tier $j$ are derived.} These quantities are used to evaluate the probability $\mathbb{P}_\text{closest}$ that a BS of a given tier is the nearest to the typical UE. 
    \item Subsection \ref{sec:perf} aims at calculating tractable expressions of the coverage and success probabilities. To do so, the expression of the Laplace Transform (LT) of the interference $\mathcal{L}_\mathcal{I}^\text{tot}$ is derived. Finally, the probability to be connected to the closest awake BS from tier $j\in \mathcal{K}_0^\text{BS}$ is given by $\mathbb{P}(S_{\Phi_j})$.
\end{itemize}

\textcolor{\CommentsColor}{Please note that in order to derive closed-form analytical expressions, several assumptions are stated in the development (i.e. Assumption 1 to 6). Their impact on the accuracy of the obtained expression is studied by means of MC simulations in section \ref{sec:simu}.}

\subsection{Performance Metrics}
\label{sec:metric}
To study the average performance of homogeneous networks, the success and coverage probabilities are commonly employed \cite{9378781}. The success probability $\mathbb{P}_\text{suc}(\tau,\lVert \mathbf{x} \rVert)=\mathbb{P}(\text{SINR}(\lVert \mathbf{x} \rVert) > \tau)$ is here defined as the probability for the SINR to be greater than a QoS threshold $\tau$ when the serving BS is at a given distance $\lVert \mathbf{x} \rVert$. The coverage probability is obtained by taking the average of this success probability over the distance to the serving BS and is given by $\mathbb{P}_\text{cov}(\tau) = \mathbb{E}_{\lVert \mathbf{x} \rVert} \left[\mathbb{P}_\text{suc}(\tau,\lVert \mathbf{x} \rVert) \right]$.

For heterogeneous networks, the typical user can be served by the closest BS of each tier. Let us denote the tier of the nearest BS as $k^{*} \in \mathcal{K}_0^\text{BS}$. The heterogeneous coverage probability follows as
\begin{equation}
\label{eq:cov_het}
     \mathbb{P}_\text{cov}^\textup{het}(\tau)
     =  \sum_{k^{*} \in \mathcal{K}_0^\text{BS}} \mathbb{E}_{\textcolor{\CommentsColor}{R_{k^{*}}}} \Big[ \mathbb{P}_\text{closest}(k^{*}|\textcolor{\CommentsColor}{R_{k^{*}}}) \mathbb{P}_\text{suc}^{k^{*}}(\tau,\textcolor{\CommentsColor}{R_{k^{*}}}) \Big],
\end{equation}
with $\textcolor{\CommentsColor}{R_{k^{*}}}$ the distance to the nearest BS of tier $k^*$ and $\mathbb{P}_\text{closest}(k^{*})$ defined as the probability that the nearest BS is from tier $k^{*}$.
\color{\CommentsColor}
In the framework of this paper, sleep control is applied as described in Subsection \ref{sec:PowerControl}. The consequence is that the closest BS is not always the one serving as it can be sleeping when $s_k(\mathcal{A}_{k^{*},1})=0$. A network success probability is thus defined by considering an active case in which the typical UE is served by the nearest BS of tier $k^{*}$ at distance $R_{k^{*},1}$, and a sleeping case in which it is served by BSs further away is then added. Knowing the activity level $\mathcal{A}_{k^{*},1}$, the network success probability is thus defined as
\begin{align*}
    \mathbb{P}_\text{suc}^\textup{net}(\tau,& R_{k^{*}}|k^{*},\mathcal{A}_{k^{*},1}) =  \mathbbm{1}_{(\mathcal{A}_{k^{*},1} \geq \mu_{k^*})}\mathbb{P}_\text{suc}^{k^{*}}(\tau,R_{k^{*}})\\
    & + \mathbbm{1}_{(\mathcal{A}_{k^{*},1} < \mu_{k^*})}\mathbb{P}_\text{cov}(\tau|S_{\Phi_{k^{*}}}',R_{k^{*}}) \numberthis \label{eq:suc_net},
\end{align*}
\color{black}
\textcolor{\CommentsColor}{in which $S_{\Phi_{k^{*}}}$ (resp. $S_{\Phi_{k^{*}}}'$)  denotes the event that the typical UE is  connected (resp. not connected) to the first BS of tier $k^{*}$.} Regarding the last term of \eqref{eq:suc_net}, the success probability must be averaged over their distance to the typical user since the locations of the BSs further than the nearest one are random. This leads to the coverage probability term \textcolor{\CommentsColor}{$\mathbb{P}_\text{cov}(\tau|S_{\Phi_{k^{*}}'},R_{k^{*}})$} when the typical UE is not connected to the closest BS.

\begin{figure*}[!t]
\normalsize
\setcounter{MYtempeqncnt}{\value{equation}}
\setcounter{equation}{10} 
\color{\CommentsColor}
\begin{align*}
\mathbb{P}_\textup{AAKCP}(\tau|u) & = \sum_{k^{*} \in \mathcal{K}_0^\text{BS}} \mathbb{E}_{\textcolor{\CommentsColor}{R_{k^{*}}}}  \bigg[\mathbb{P}_\text{closest}(k^{*} |\textcolor{\CommentsColor}{R_{k^{*}}}) \mathbb{E}_{\mathcal{A}_{k^{*},1}} \Big[ \mathbb{P}_\text{suc}^{\textup{net}}(\tau,\textcolor{\CommentsColor}{R_{k^{*}}}|k^{*},\mathcal{A}_{k^{*},1})  \Big]  \bigg] \\
    = & \sum_{k^{*} \in \mathcal{K}_0^\text{BS}} \int_{R_\text{min}}^{R_\text{max}}   \mathbb{P}_\text{closest}(k^{*} | \textcolor{\CommentsColor}{r_{k^{*}}}) \bigg[
    \mathbb{P}_{k^*,1}(1|\textcolor{\CommentsColor}{r_{k^{*}}})
     \mathbb{P}_\text{suc}^{k^{*},n} \left(\tau,\textcolor{\CommentsColor}{r_{k^{*}}}\right) \\
     & + \mathbb{P}_{k^*,1}(0|\textcolor{\CommentsColor}{r_{k^{*}}})
    \mathbb{P}_\text{cov}\left(\tau \Big|\textcolor{\CommentsColor}{S_{\Phi_{k^{*}}}'}, \textcolor{\CommentsColor}{r_{k^{*}}} \right) 
    \bigg] f_{\textcolor{\CommentsColor}{R_{k^{*}}}}(\textcolor{\CommentsColor}{r_{k^{*}}}|R_\text{min})  \textup{d}\textcolor{\CommentsColor}{r_{k^{*}}},\numberthis \label{eq:AAKCP_u}
\end{align*}  
\color{black}
\setcounter{equation}{\value{MYtempeqncnt}}
\hrulefill
\vspace*{4pt}
\end{figure*}
\addtocounter{equation}{1}
Replacing the success probability in \eqref{eq:cov_het} by \eqref{eq:suc_net} and averaging with respect to the activity level yields the Activity Averaged K-tier Coverage Probability (AAKCP). For a typical user of a given $u \in \mathcal{K}^\text{UE}$, this metric is defined by \eqref{eq:AAKCP_u} \textcolor{\CommentsColor}{where $\mathbb{P}_{k^*,1}(0|r_{k^{*}})$ and $\mathbb{P}_{k^*,1}(1|r_{k^{*}})$ represent the probability of the nearest BS being asleep or awake, respectively, based on the distance $r_{k^{*}}$ to that BS.}

The tier $u$ to which the typical user belongs has an impact on the success probability and the distributions of the distances and the activities. The AAKCP for a mean user from any tier can then be obtained as
\begin{equation}
\label{eq:AAKCP_gen}
    \mathbb{P}_\textup{AAKCP}(\tau) = \frac{1}{\lambda_\text{tot}^\text{UE}}\sum_{u \in \mathcal{K}^\text{UE}} d_u \mathbb{P}_\textup{AAKCP}(\tau|u),
\end{equation}
with $\lambda_\text{tot}^\text{UE}=\sum_{u \in \mathcal{K}^\text{UE}} d_u$. The Area Spectral Efficiency (ASE) which represents the network throughput is defined as $\mathcal{T} = \lambda_\text{tot}^\text{BS}\mathbb{P}_\textup{AAKCP}(\tau) \log_2(1+\tau)$ expressed in (bit/s)/Hz$\cdot \text{m}^2$, where $\lambda_\text{tot}^\text{BS}=\sum_{k\in \mathcal{K}^\text{BS}} \lambda_k^\text{BS}$ is the network density of BS. Lastly, the Energy Efficiency (EE) is defined as:
\begin{equation} \label{eq:EE}
    \text{EE} = \frac{\mathcal{T}}{P_\text{net}} ~\left[\frac{\text{bit}}{\text{Hz J}} \right].
\end{equation}
\subsection{Modeling of the Cellular Activity}
\label{sec:activity}
The activity of a BS is here defined as the number of UEs within its Voronoï region before any sleeping strategy. In our framework, the activity is thus a discrete variable. For homogeneous networks, the distribution of the number of UEs per cell is studied in \cite{7433666}, \cite{9079449} and \cite{9443319,9887891} for homogeneous, clustered and spatially coupled users respectively. However, no characterization has been performed for HetNets with non-uniform user distribution. \textcolor{\CommentsColor}{Note that this definition of the activity does not consider coordination between BSs that enables offloading methods. The objective is to present a straightforward SG framework demonstrating the advantages of activity-based power control. Other definitions of the activity, such as the one presented in \cite{6684701}, could be used to evaluate the AAKCP metric, provided that a pmf of this activity can be derived.}

Other definitions of the activity can be used to derive its pmf. For instance, one could adapt the expression obtained through Markov Theory in \cite{6684701} based on the considered network topology.

In order to evaluate the AAKCP metric, the PMF $p_{\mathcal{A}_{k,i}}(a)$ of the random activity of every BS $i$ of each tier $k \in \mathcal{K}_0^\text{BS}$ has to be evaluated. In this subsection, unlike the others of this paper, the characterization is derived for a typical BS located at the origin and not for a typical UE. Indeed, this subsection aims at deriving properties of a BS. In the second part of this subsection, the obtained expressions are adapted using Palm theory to match the system model defined for a typical UE at $(0,0)$.
The association cell $\mathcal{C}_0$ of the typical BS is defined as $\mathcal{C}_0 = \{ \mathbf{y}\in\mathbb{R}^2 : \lVert \mathbf{y} \rVert \leq \lVert \mathbf{y}-\mathbf{t}_k \rVert, \forall \mathbf{t}_k \in \Phi_k^\text{BS}, \forall k \in \mathcal{K}_0^\text{BS}\}$.
The total load to characterize is defined as $S \triangleq \sum_{u\in \mathcal{K}_\text{UE}} \Psi_u(\mathcal{C}_0)$ where $\Psi_u(\mathcal{C}_0)$ is the number of users from user tier $u$ in the typical cell. To ensure mathematical tractability, we make a first assumption inspired by \cite[Th. 4]{190703635}:
\begin{assumption}
The association cell of the typical BS is approximated by a disk of radius $R_c$ centered at the position of the BS. Therefore, $\Psi_u(\mathcal{C}_0)$ is assumed to be equal to $\Psi_u(b(\mathbf{0},R_c))$. The disk is defined to have the same area as the association cell.
\end{assumption}
For a given radius $R_c$, the random counting measures from each tier are independent, since all PPs of UEs are independent. The PGF of the total load is thus given by the product of the individual PGFs of the number of UEs from each tier, integrated over the PDF of the radius $R_c$:
\begin{equation}
\label{eq:PGF_S}
    G_S(\theta) = \int_0^{\infty} \prod_{u \in \mathcal{K}^\text{UE}} G_{\Psi_u(b(\mathbf{0},r_c))}(\theta|r_c) f_{R_c}(r_c) \textup{d}r_c. 
\end{equation}

The distribution of the typical cell radius $R_c$ is stated in lemma \ref{lem:sec4_radius}:
\begin{lemma}
\label{lem:sec4_radius}
The radius $R_c$ of the approximated circular typical cell follows a Nakagami distribution of parameters $m=3.575$ and $\Omega=(\pi \lambda_\textup{tot}^\textup{BS})^{-1}$.
\end{lemma}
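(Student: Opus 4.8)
The plan is to reduce the statement to the established Gamma approximation of the typical Poisson--Voronoï cell area and then transport that law through the area-preserving disk substitution of Assumption~3. First I would note that, since independent HPPPs superpose to an HPPP, the aggregate base station process $\bigcup_{k\in\mathcal{K}_0^\text{BS}}\Phi_k^\text{BS}$ is an HPPP of intensity $\lambda_\text{tot}^\text{BS}=\sum_{k}\lambda_k^\text{BS}$. The association cell $\mathcal{C}_0$ is by definition the typical cell of the Voronoï tessellation generated by this aggregate process, so its area $A=|\mathcal{C}_0|$ is distributed as a typical Poisson--Voronoï cell area, with mean $\mathbb{E}[A]=1/\lambda_\text{tot}^\text{BS}$.

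Next I would invoke the Gamma fit of this area used in \cite[Th.~4]{190703635}: the normalized area $\lambda_\text{tot}^\text{BS}A$ is approximated by a Gamma law of shape $m=3.575$, so that $A$ has density $f_A(a)=\frac{\beta^m}{\Gamma(m)}a^{m-1}e^{-\beta a}$ with the rate $\beta$ fixed by the mean constraint $m/\beta=1/\lambda_\text{tot}^\text{BS}$, i.e.\ $\beta=m\,\lambda_\text{tot}^\text{BS}$. Assumption~3 then equates the cell to a disk of the same area, $\pi R_c^2=A$, giving the monotone map $A=\pi R_c^2$ with $\mathrm{d}A/\mathrm{d}R_c=2\pi R_c$.

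A one-dimensional change of variables yields
\[
f_{R_c}(r)=f_A(\pi r^2)\,2\pi r=\frac{2(\beta\pi)^m}{\Gamma(m)}\,r^{2m-1}\,e^{-\beta\pi r^2}.
\]
Comparing term by term with the Nakagami density $f(r;m,\Omega)=\frac{2m^m}{\Gamma(m)\Omega^m}r^{2m-1}e^{-(m/\Omega)r^2}$, both the exponent and the normalization force $m/\Omega=\beta\pi$, hence $\Omega=m/(\beta\pi)$. Substituting $\beta=m\,\lambda_\text{tot}^\text{BS}$ collapses this to $\Omega=(\pi\lambda_\text{tot}^\text{BS})^{-1}$, while the shape $m=3.575$ is inherited unchanged from the area approximation, which proves the claim.

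I expect the genuine difficulty to lie not in this algebra but in the status of the Gamma area approximation: the typical Poisson--Voronoï cell area has no closed-form density, so $m=3.575$ is an empirical/variational constant rather than an exact one, and the Nakagami conclusion is therefore only as accurate as that fit combined with the circular-cell hypothesis of Assumption~3. The transport through the nonlinear map $A\mapsto\sqrt{A/\pi}$ is faithful precisely because the map is monotone and the first moment is matched exactly; I would accordingly present the result as an approximation and defer its numerical validation to the Monte-Carlo study in Section~\ref{sec:simu}.
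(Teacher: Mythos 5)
Your proposal is correct and is essentially the standard derivation behind this result: the paper itself gives no argument and simply defers to \cite[Theorem 2]{9079449}, whose proof proceeds exactly as you do --- superposition of the HPPP tiers into one HPPP of intensity $\lambda_\textup{tot}^\textup{BS}$, the Gamma fit with shape $3.575$ for the normalized typical Poisson--Voronoï cell area, and the area-preserving change of variables $A=\pi R_c^2$ yielding the Nakagami law with $\Omega=\mathbb{E}[R_c^2]=(\pi\lambda_\textup{tot}^\textup{BS})^{-1}$. Your closing caveat that $m=3.575$ is an empirical fitting constant, so the lemma is an approximation to be validated numerically, is also the correct reading of its status in the paper.
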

\begin{proof}
See the proof of \cite[Theorem 2]{9079449}.
\end{proof}
An expression of $G_{\Psi_u(b(\mathbf{0},r_c))}(\theta|r_c)$ must be derived for each category of tiers of UEs in order to obtain the PGF of the total load $S$. Then, the PMF of the load can be obtained by performing the inverse z-transform of its PGF which is approximated by an Inverse Discrete Fourier Transform (IDFT) for numerical computation:
\begin{equation}
\label{eq:pmf_act}
    p_\mathcal{A}(n) = \text{IDFT}\left\{G_{S}\left(e^{j2\pi n/N}\right) \right\}.
\end{equation}

\subsubsection{Derivation of the PGFs}
\label{sec:derivation_pmf}
For the sake of mathematical tractability, two additional assumptions are employed:
\begin{assumption}
Neighboring BSs have an independent activity.
\end{assumption}
\begin{assumption}
\label{ass:size_radius}
The size of the radius $R_c$ is independent of the UEs' locations.
\end{assumption}
In the next lemmas, the PGFs for all categories of UEs are presented. 
\begin{lemma}
\label{lem:sec4_pgf1_2}
The PGF $G_{\Psi_u(b(\mathbf{0},r_c))}(\theta|r_c)$ for a PP $\Phi_u^\textup{UE}$ of category 1 or 2 independent of the typical BS location, is given by
\begin{align*}
& G_{\Psi_u(b(\mathbf{0},r_c))}(\theta|r_c)
= \exp\Bigg\{ -2\pi \lambda_{p,u}^\textup{UE} \\
 & \int_{\chi}^{\infty} \bigg( 1-\exp\Big[-\overline{m}_u(1-\theta)\xi(r_c,w) \Big] \bigg) w \textup{d}w \Bigg\} \numberthis \label{eq:chap4_G_Psi},
    \end{align*}
    in which
    $\xi(r,w)$ can be found in \cite[Corollary 1]{9079449} for \textcolor{\CommentsColor}{an MCP}, and $\chi$ is an excluding distance to the cluster centers:
    \begin{equation*}
    \begin{cases}
    &\text{For a PP of category 1: }\chi = 0,\\ 
    &\text{For a PP of category 2: }\chi \leq 2r_c.
    \end{cases}
    \end{equation*}
\end{lemma}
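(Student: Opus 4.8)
The plan is to exploit the hierarchical structure of the MCP together with the probability generating functional (PGFL) of the parent Poisson process. First I would decompose the count: because $\Phi_u^\textup{UE}$ is a cluster process, the number of daughter points falling inside the target disk $b(\mathbf{0},r_c)$ is the sum, over all parent points $\mathbf{z}\in\Phi_p$, of the daughters of the cluster centred at $\mathbf{z}$ that land in $b(\mathbf{0},r_c)$. Conditioned on the parent configuration these per-cluster contributions are mutually independent (the daughter PPs are independent across clusters), so the total PGF factorises over clusters.

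The second step is the single-cluster PGF obtained by Poisson thinning. A cluster centred at $\mathbf{z}$ with $\|\mathbf{z}\|=w$ carries a $\mathrm{Poisson}(\overline{m}_u)$ number of daughters, each placed uniformly in $b(\mathbf{z},r_{m,u})$; each daughter independently lies in $b(\mathbf{0},r_c)$ with probability equal to the normalised area of intersection of $b(\mathbf{z},r_{m,u})$ with $b(\mathbf{0},r_c)$. By the thinning property of the Poisson law, the number landing inside is $\mathrm{Poisson}\big(\overline{m}_u\,\xi(r_c,w)\big)$, where $\xi(r_c,w)$ is exactly this overlap probability imported from \cite[Corollary 1]{9079449}. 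Since a $\mathrm{Poisson}(\nu)$ variable has PGF $e^{-\nu(1-\theta)}$, the contribution of a single cluster at distance $w$ has PGF $\exp[-\overline{m}_u(1-\theta)\xi(r_c,w)]$.

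Third, I would insert this single-cluster PGF as the mark function $v(\mathbf{z})$ into the PGFL of the homogeneous parent PPP of intensity $\lambda_{p,u}^\textup{UE}$, namely $\mathbb{E}\big[\prod_{\mathbf{z}\in\Phi_p}v(\mathbf{z})\big]=\exp\!\big(-\lambda_{p,u}^\textup{UE}\int_{\mathbb{R}^2}(1-v(\mathbf{z}))\,\mathrm{d}\mathbf{z}\big)$. Passing to polar coordinates, the angular integral contributes a factor $2\pi$ and the radial integral runs over the admissible centre distances, which reproduces \eqref{eq:chap4_G_Psi}. The lower limit $\chi$ finally encodes which parents are eligible: for category 1 the parents are placed independently of the typical BS, so every centre is allowed and $\chi=0$; for category 2 the parents coincide with BS positions, and the Voronoï property (approximated by $b(\mathbf{0},r_c)$) forbids competing BS centres from lying inside the typical cell, which through the perpendicular-bisector argument yields an exclusion out to $\chi\leq 2r_c$.

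The main obstacle is not the PGFL manipulation, which is routine, but rather two supporting points: justifying that the per-cluster counts are independent and Poisson-thinned given the parents (this rests squarely on the MCP's Poisson daughter assumption), and correctly identifying the geometric kernel $\xi(r_c,w)$ as the intersection area of two disks expressed through circular segments. Since $\xi$ is taken directly from the cited corollary, the remaining model-specific care lies in choosing the exclusion distance $\chi$ for category 2 and its bound $2r_c$, which is a geometric approximation rather than a deep analytic difficulty.
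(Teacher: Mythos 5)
Your proposal is correct and follows essentially the same route as the paper's Appendix A: the paper invokes the PGFL of the PCP from its cited reference and passes to polar coordinates, which is exactly the factorisation over clusters, per-cluster Poisson thinning giving the single-cluster PGF $\exp[-\overline{m}_u(1-\theta)\xi(r_c,w)]$, and parent-PPP PGFL that you spell out step by step. Your identification of $\xi(r_c,w)=\int_0^{r_c}f_d(v|w)\,\mathrm{d}v$ as the disk-overlap probability and your bisector justification of the exclusion distance $\chi$ for category~2 match the paper's argument.
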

\begin{proof}
See Appendix \ref{app_A}.
\end{proof}
The value of $\chi$ for PPs of category 2 is a parameter to tune. The optimal choice is discussed in Section \ref{sec:simu}.
\newline

\begin{lemma}
\label{lem:sec4_pgf_linked}
The PGF $G_{\Psi_u(b(\mathbf{0},r_c))}(\theta|r_c)$ for a daughter PP $\Phi_{d,u}^\textup{UE}$ of category 2 whose distribution is directly linked to the position of the typical BS is given by
\begin{equation}
G_{\Psi_u(b(\mathbf{0},r_c))}(\theta|r_c) =  \exp\Big[-\overline{m}_u(1-\theta) \xi(r_c,0) \Big]
\end{equation}
in which \textcolor{\CommentsColor}{$\xi(r,0)=\frac{\min(r^2,r_{m,u}^2)}{r_{m,u}^2}$ for an MCP}.
\end{lemma}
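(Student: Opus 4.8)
The plan is to exploit the structural difference between this lemma and Lemma \ref{lem:sec4_pgf1_2}: when the typical BS is itself a cluster center of a category-2 user tier, the daughter process $\Phi_{d,u}^\text{UE}$ reduces to a \emph{single} MCP cluster placed deterministically at the origin. There is therefore no parent HPPP to average over, and the spatial integration over cluster-center distances $w$ that appears in \eqref{eq:chap4_G_Psi} collapses to the single value $w=0$. The count $\Psi_u(b(\mathbf{0},r_c))$ then involves only the daughter points of that one cluster, and the PGF reduces to that of a single thinned Poisson variable.

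First I would recall from the MCP definition of Subsection \ref{sec:math_back} that the number of daughter points of the cluster is Poisson distributed with mean $\overline{m}_u$, each point being placed independently and uniformly on the disk $b(\mathbf{0},r_{m,u})$. Conditioning on $R_c=r_c$, a single daughter point lies in the observation ball $b(\mathbf{0},r_c)$ with probability equal to the area of $b(\mathbf{0},r_c)\cap b(\mathbf{0},r_{m,u})$ divided by $\pi r_{m,u}^2$. Because the two disks are concentric, this ratio is simply $\min(r_c^2,r_{m,u}^2)/r_{m,u}^2$, which is exactly $\xi(r_c,0)$ from \cite[Corollary 1]{9079449} evaluated at a vanishing cluster-center distance.

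Next I would invoke Poisson thinning: retaining each of the $\text{Poisson}(\overline{m}_u)$ daughter points independently with probability $\xi(r_c,0)$ produces a count that is again Poisson, now with mean $\overline{m}_u\,\xi(r_c,0)$. The claimed PGF then follows directly from the closed form of the Poisson PGF, $\exp[-\lambda(1-\theta)]$ with $\lambda=\overline{m}_u\,\xi(r_c,0)$.

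Rather than a genuine obstacle, the only point requiring care is the justification that conditioning on $R_c=r_c$ leaves the uniform placement of the daughter points unaffected, so that $\xi(r_c,0)$ may be treated as a fixed thinning probability. This is precisely what Assumption \ref{ass:size_radius} guarantees by decoupling the cell radius from the UE positions; the remaining evaluation of the overlap area in the two regimes $r_c\le r_{m,u}$ and $r_c>r_{m,u}$ is routine and yields the piecewise $\min(\cdot,\cdot)$ form.
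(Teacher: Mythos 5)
Your argument is correct and is essentially the paper's own route: the paper proves this lemma by specializing the development of Appendix~\ref{app_A} to a single cluster whose center sits at the typical BS, i.e.\ the inner factor $\exp\!\big[-\overline{m}_u(1-\theta)\xi(r_c,w)\big]$ of \eqref{eq:chap4_G_Psi} evaluated at $w=0$ with the outer integral over parent points removed, which is exactly your thinned-Poisson computation with retention probability $\xi(r_c,0)=\min(r_c^2,r_{m,u}^2)/r_{m,u}^2$. Your explicit appeal to Poisson thinning and to Assumption~\ref{ass:size_radius} makes the same steps slightly more self-contained, but it is not a different proof.
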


\begin{proof}
Similar to Appendix \ref{app_A}.\\
\end{proof}

\begin{lemma}
\label{lem:sec4_pgf_3}
The PGF $G_{\Psi_u(b(\mathbf{0},r_c))}(\theta|r_c)$ for a PP $\Phi_u^\textup{UE}$ of category 3 is given by
\begin{equation}
G_{\Psi_u(b(\mathbf{0},r_c))}(\theta|r_c) =  \exp\Big[-\pi \lambda_u^\textup{UE} (1-\theta) r_c^2 \Big].
\end{equation}
\end{lemma}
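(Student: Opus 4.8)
The plan is to exploit the defining property of the homogeneous Poisson point process directly, since category 3 is by construction just an HPPP of intensity $\lambda_u^\textup{UE}$ that is independent of both the base station locations and, by Assumption \ref{ass:size_radius}, the cell radius $R_c$. First I would condition on $R_c = r_c$. Because the radius is independent of the user positions, the conditional law of $\Phi_u^\textup{UE}$ restricted to the disk $b(\mathbf{0},r_c)$ is unchanged, so the random counting measure $\Psi_u(b(\mathbf{0},r_c))$ retains the Poisson law that the HPPP assigns to any fixed bounded Borel set: it is a Poisson random variable whose mean is the intensity times the area of the disk, namely $\mu = \lambda_u^\textup{UE}\,|b(\mathbf{0},r_c)| = \pi \lambda_u^\textup{UE} r_c^2$.

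The remaining step is a textbook computation of the probability generating function of a Poisson variable. Writing $G_{\Psi_u(b(\mathbf{0},r_c))}(\theta|r_c) = \mathbb{E}\big[\theta^{\Psi_u(b(\mathbf{0},r_c))}\,\big|\,r_c\big]$ and summing the series $\sum_{n\geq 0}\theta^n e^{-\mu}\mu^n/n! = e^{-\mu}e^{\mu\theta} = e^{-\mu(1-\theta)}$ yields $\exp\big[-\pi\lambda_u^\textup{UE}(1-\theta)r_c^2\big]$, which is exactly the claimed expression.

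I do not expect any genuine obstacle here; this is the simplest of the three PGF lemmas precisely because category 3 involves no clustering, so there is no excluding distance $\chi$ and no inner integral over cluster centers as in Lemma \ref{lem:sec4_pgf1_2}. The only point deserving a sentence of care is the independence invoked when conditioning on $r_c$, which is exactly what Assumption \ref{ass:size_radius} supplies. As a consistency check one could observe that this expression matches the $r_c$-dependent exponent obtained by formally pushing the clustered PGFs toward a uniform limit, but the direct HPPP argument above is both shorter and self-contained.
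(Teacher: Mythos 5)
Your proof is correct and amounts to the same argument as the paper's: the paper obtains the result by applying the probability generating functional of an HPPP to the indicator function in \eqref{eq:app_loadPCP1}, which collapses to exactly the Poisson-count computation you carry out directly (a Poisson variable of mean $\pi\lambda_u^\textup{UE}r_c^2$ and its textbook PGF). Your explicit appeal to Assumption \ref{ass:size_radius} when conditioning on $r_c$ is a welcome touch of care that the paper leaves implicit.
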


\begin{proof}
The proof is obtained by applying in \eqref{eq:app_loadPCP1} of Appendix \ref{app_A} the PGF of an HPPP defined in \cite{PPP_GEN}.
\end{proof}

\subsubsection{Adaptation regarding Palm Theory}
Palm theory states that the distribution for a typical BS holds for any BS selected uniformly at random among all BSs of a PP \cite{PPP_GEN}. However, evaluating the performance of the network in Section \ref{sec:perf} requires the distribution of the activity of the $i^{th}$ closest BS to the typical UE. This is not a random selection and Palm theory does not apply. In this paper, we thus make an additional assumption.
\begin{assumption}
The load developed for the typical BS is assumed to be valid for all BSs except for the closest to the typical UE, for which the adaptation below is performed.\\
\end{assumption}
With the assumption of a circular Voronoï region, the load of the closest BS is not independent of its distance to the typical UE. Indeed, the radius $R_c$ of the Voronoï region of the closest BS must at least be larger than its distance to the typical UE. If it is not the case, this BS would not be the closest one by contradiction.

To adapt the load for the nearest BS which is from tier $k^{*}\in \mathcal{K}_0^\text{BS}$ and located at a distance $\textcolor{\CommentsColor}{r_{k^{*}}}$, the lower limit of the integration on $R_c$ of \eqref{eq:PGF_S} must be  $\textcolor{\CommentsColor}{r_{k^{*}}}$ instead of $0$. The PDF of the radius $R_c$ must in that case be normalized and is given by
\begin{equation}
    f_{R_c|\textcolor{\CommentsColor}{r_{k^{*}}}}(r_c) = \frac{f_{R_c}(r_c)}{\overline{F}_{R_c}(\textcolor{\CommentsColor}{r_{k^{*}}})}, \quad r_c \geq \textcolor{\CommentsColor}{r_{k^{*}}},
\end{equation}
where $\overline{F}_{R_c}$ is the CCDF of $R_c$ given by the regularized upper incomplete gamma function $Q\left(c,\frac{c}{\Omega}\textcolor{\CommentsColor}{r_{k^{*}}^2}\right)$ with $\Omega=(\pi \lambda_\text{tot}^\text{BS})^{-1}$.
\subsection{Distribution of the Serving Distance}
\label{sec:distances}
The expression of the distribution of the distance $\textcolor{\CommentsColor}{R_{j}}$ between the typical UE and the closest BS of tier $j$ is developed in this subsection. The corresponding PDF and CCDF are stated in Lemma \ref{lem:sec5_dist}. Lemma \ref{lem:prob_closest} then gives the expression of the term $\mathbb{P}_\text{closest}(k^{*} | \textcolor{\CommentsColor}{r_{k^{*}}})$.
\color{\CommentsColor}
\begin{lemma}
\label{lem:sec5_dist}
The PDF $f$ and the CCDF $\overline{F}$ of the distance between the typical UE and the closest BS from tier $j$, given an excluding radius $\kappa$ defining a region without BSs, are expressed as follows:
\begin{itemize}
    \item \underline{For $j\in \mathcal{K}^\textup{BS}$ and $r_{j} \geq \kappa$:}
    \begin{align*}
        f_{R_{j}}(r_{j}|\kappa) &= 2\pi \lambda_j^\text{BS} r_{j}
         \exp(-\lambda_j^\text{BS} \pi (r_{j,i}^2 - \kappa^2)). \numberthis \\
        \overline{F}_{R_{j}}(r_{j}|\kappa) &= \exp(-\lambda_j^\text{BS} \pi (r_{j}^2 - \kappa^2)). \numberthis
    \end{align*}
    \item  \underline{For $j=0$ and $\kappa \leq r_{0} \leq r_{m,0}$:}
    \begin{align}
        f_{R_{0}}(r_{0}|\kappa) & = \frac{2 r_{0}}{r_{m,0}^2-\kappa^2}. \numberthis \label{eq:dist_MCP} \\
        \overline{F}_{R_{0}}(r_{0}|\kappa) & = \frac{r_{m,0}^2-r_{0}^2 }{r_{m,0}^2-\kappa^2}.
    \end{align}
\end{itemize}
\end{lemma}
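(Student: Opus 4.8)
The plan is to treat the two tier types separately, since each tier $j\in\mathcal{K}^\textup{BS}$ is an HPPP whereas tier $0$ reduces to a single deterministic BS with a randomly located typical UE, so the two distributions arise from genuinely different mechanisms.

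For $j\in\mathcal{K}^\textup{BS}$, I would start from the CCDF and invoke the void probability of the HPPP. Because the excluding region of radius $\kappa$ removes all potential BSs inside $b(\mathbf{0},\kappa)$, the event $\{R_{j}>r_{j}\}$ is exactly the event that $\Phi_j^\textup{BS}$ places no point in the annulus $b(\mathbf{0},r_{j})\setminus b(\mathbf{0},\kappa)$, whose area is $\pi(r_{j}^2-\kappa^2)$. The void probability of an HPPP of intensity $\lambda_j^\textup{BS}$ over a region of area $A$ is $\exp(-\lambda_j^\textup{BS}A)$, which reproduces the claimed CCDF directly. The PDF then follows by differentiation, $f_{R_{j}}(r_{j}|\kappa)=-\tfrac{\mathrm{d}}{\mathrm{d}r_{j}}\overline{F}_{R_{j}}(r_{j}|\kappa)$, the chain rule supplying the prefactor $2\pi\lambda_j^\textup{BS}r_{j}$.

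For $j=0$, the single BS sits at the cluster center, and the typical UE, being a category-2 daughter point, is by construction uniformly distributed in the disk $b(\mathbf{0},r_{m,0})$ of the MCP as recalled in Subsection \ref{sec:math_back}. Hence $R_{0}$ is simply the distance from a uniform point in a disk of radius $r_{m,0}$ to its center, with unconditional CDF $r_{0}^2/r_{m,0}^2$ and density $2r_{0}/r_{m,0}^2$. Imposing the excluding radius here amounts to conditioning on $R_{0}\geq\kappa$, so I would normalize by $\mathbb{P}(R_{0}\geq\kappa)=(r_{m,0}^2-\kappa^2)/r_{m,0}^2$, which yields the stated $f_{R_{0}}$; dividing the unconditional CCDF $(r_{m,0}^2-r_{0}^2)/r_{m,0}^2$ by the same factor gives $\overline{F}_{R_{0}}$.

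I do not anticipate a serious obstacle, since both derivations are standard; the only point demanding care is the bookkeeping of the excluding radius. One must observe that, thanks to the independence of the void probabilities of a Poisson process over disjoint regions, physically removing the inner disk and conditioning on an empty inner disk lead to the identical annulus expression, whereas for tier $0$ the analogous conditioning is applied to the uniform UE position rather than to a point process. Making this consistency explicit is precisely what ties the two cases to the common normalizing factor involving $-\kappa^2$ that appears in both formulas.
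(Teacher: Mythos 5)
Your proposal is correct and follows the same route the paper takes implicitly: the paper's proof simply cites \cite{PPP_GEN} and \cite{9079449}, whose underlying arguments are exactly the HPPP void probability over the annulus $b(\mathbf{0},r_j)\setminus b(\mathbf{0},\kappa)$ and the uniform-in-disk distance for the MCP daughter point, normalized by the conditioning event $R_0\geq\kappa$. Your explicit handling of the exclusion radius (independence of Poisson counts on disjoint regions for $j\in\mathcal{K}^\textup{BS}$ versus truncation of the uniform radial law for $j=0$) is exactly the adaptation the paper alludes to, and it also makes clear that the $r_{j,i}^2$ appearing in the stated PDF is a typo for $r_j^2$.
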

\color{black}

\begin{proof}
Taking the exclusion distance into account, the expressions are adapted from \cite{PPP_GEN} and \cite{9079449} for $j\in \mathcal{K}^\text{BS}$ and $j=0$ respectively.
\end{proof}
\textcolor{\CommentsColor}{In this paper, the PDF and the CCDF of the distance to the closest BS of subtier $\Tilde{\Phi}_{j}^{\textup{BS}}$ will be denoted $f_{\Tilde{R}_{j}}$ and $\overline{F}_{\Tilde{R}_{j}}$. Using the independent thinning property of Poisson PPs, it can be obtained by replacing $\lambda_j^\textup{BS}$ by $q_j \lambda_j^\textup{BS}$.}
\begin{lemma}
\label{lem:prob_closest}
The probability that the closest BS is from tier $k^{*}$, knowing the distance $\textcolor{\CommentsColor}{r_{k^{*}}}$ to the closest BS of this tier, is given by
\begin{equation}
    \mathbb{P}_\textup{closest}(k^{*} | \textcolor{\CommentsColor}{r_{k^{*}}}) = \prod_{j \in \mathcal{K}_0^\textup{BS} \backslash \{k^{*}\}} \overline{F}_{\textcolor{\CommentsColor}{R_{k^{*}}}}(\textcolor{\CommentsColor}{r_{k^{*}}}|R_\textup{min}).
\end{equation}
\end{lemma}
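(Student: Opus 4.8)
The plan is to read $\mathbb{P}_\textup{closest}(k^{*}|r_{k^{*}})$ as a conditional probability and to reduce the event ``the closest BS is from tier $k^{*}$'' to a simple geometric condition on the competing tiers. Conditioned on the nearest BS of tier $k^{*}$ lying at distance $R_{k^{*}}=r_{k^{*}}$, that BS is the globally nearest one to the typical UE if and only if no other tier places a BS inside the disk $b(\mathbf{0},r_{k^{*}})$. Equivalently, for every competing tier $j\in\mathcal{K}_0^\textup{BS}\setminus\{k^{*}\}$ the nearest BS of tier $j$ must satisfy $R_j>r_{k^{*}}$. This gives
\begin{equation*}
\mathbb{P}_\textup{closest}(k^{*}|r_{k^{*}})=\mathbb{P}\!\left(\bigcap_{j\in\mathcal{K}_0^\textup{BS}\setminus\{k^{*}\}}\{R_j>r_{k^{*}}\}\ \Big|\ R_{k^{*}}=r_{k^{*}}\right).
\end{equation*}

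First I would invoke the mutual independence of the BS tiers. Each tier $k\in\mathcal{K}^\textup{BS}$ is modelled as an independent HPPP $\Phi_k^\textup{BS}$, and the auxiliary tier $0$, present only for a typical UE of category $2$, reduces by Slivnyak's theorem to the single cluster-centre point $\mathbf{y}$ whose law is independent of the thinned remainder $\Phi_k^\textup{BS}\setminus\{\mathbf{y}\}$ and of all other tiers. Hence the nearest-BS distances $\{R_j\}_{j\in\mathcal{K}_0^\textup{BS}}$ are mutually independent, and conditioning on $R_{k^{*}}$ leaves the distribution of each competing $R_j$ untouched. The joint survival event then factorises, each factor being exactly the CCDF of Lemma \ref{lem:sec5_dist} evaluated at $r_{k^{*}}$ with exclusion radius $R_\textup{min}$:
\begin{equation*}
\mathbb{P}_\textup{closest}(k^{*}|r_{k^{*}})=\prod_{j\in\mathcal{K}_0^\textup{BS}\setminus\{k^{*}\}}\mathbb{P}(R_j>r_{k^{*}})=\prod_{j\in\mathcal{K}_0^\textup{BS}\setminus\{k^{*}\}}\overline{F}_{R_j}(r_{k^{*}}|R_\textup{min}),
\end{equation*}
which is the claimed expression (the product index and the CCDF subscript running over the competing tiers $j$).

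The main obstacle is the independence step rather than any computation. One must check that conditioning on the tier-$k^{*}$ distance does not couple the competing tiers, and in particular that the artificial tier $0$ built from the cluster centre of a category-$2$ user is genuinely independent of the surrounding HPPPs; this is precisely what Slivnyak's theorem supplies, since $\Phi_k^\textup{BS}\setminus\{\mathbf{y}\}$ has the same law as $\Phi_k^\textup{BS}$ and is independent of $\mathbf{y}$. The remaining bookkeeping is to carry the common exclusion radius $R_\textup{min}$ consistently into every CCDF factor, and to note that the full (un-thinned) distance distributions $\overline{F}_{R_j}$ are the relevant ones here, since the AAKCP in \eqref{eq:AAKCP_u} accounts separately for whether this nearest BS is awake or asleep.
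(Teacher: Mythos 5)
Your proof is correct and follows the same underlying argument as the paper, which simply defers to \cite[Lemma~1]{7511509}: conditioned on $R_{k^{*}}=r_{k^{*}}$, the tier-$k^{*}$ BS is globally nearest iff every competing tier has its nearest BS beyond $r_{k^{*}}$, and the mutual independence of the tiers (with tier $0$ handled via Slivnyak) factorises this into the product of the CCDFs of Lemma~\ref{lem:sec5_dist}. You also correctly read the subscript in the stated product as $\overline{F}_{R_j}$ rather than $\overline{F}_{R_{k^{*}}}$, and correctly observe that the un-thinned distance distributions are the relevant ones since the AAKCP accounts for the sleep state of the nearest BS separately.
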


\begin{proof}
This lemma directly follows from \cite[Lemma 1]{7511509}.
\end{proof}
\subsection{Coverage and Success Probability}
\label{sec:perf}
The coverage and the success probability are derived in this section. Assuming perfect alignment of the beam for the served UE at the origin, the expression of the Signal-to-Interference-plus-Noise Ratio (SINR) for the downlink transmission from an active BS of tier $j$ at distance $r_{j}$, is given by
\begin{equation}
\label{eq:chap5_SINR}
    \text{SINR}(r_j|S_{\Phi_{j}}) = \frac{A_{j} |\rho_j|^2 r_j^{-\alpha} \mathbbm{1}(r_j\leq R_\text{max})}{\sigma_\textup{noise}^2 + \mathcal{I}^\textup{tot}(r_j)},
\end{equation}
in which $\rho_j$ denotes $\rho_{\mathbf{x}_j}$, the total interference is $\mathcal{I}^\textup{tot}$ and $A_{j} = \beta M_{j} P_{j}$. At this point, a first observation is that \eqref{eq:chap5_SINR} does not depend on the index of the serving BS. Therefore, a per-tier success probability can be derived by following the steps of the coverage analysis in \cite{7913628}. When Nakagami-m fading is assumed, the success probability is given by
\begin{equation}
\label{eq:chap5_probsuc}
    \mathbb{P}_\text{suc}^{j}(\tau,r_j)= \sum_{l=0}^{m-1} \frac{(-\tau m  r_j^{\alpha})^l}{l! (A_{j})^l } \mathcal{L}_{N,\mathcal{I}^\textup{tot}}^{(l)}\left(\frac{\tau m r_j^{\alpha}}{A_{j}} ; r_j \right),
\end{equation}
in which $r_j \leq R_\textup{max}$ and $\mathcal{L}_{N,\mathcal{I}^\textup{tot}}^{(l)}$ is the $l^{th}$ derivative of the LT of the noise and interference term $\sigma_\textup{noise}^2+\mathcal{I}^\textup{tot}$. This LT is defined as $\mathcal{L}_{N,\mathcal{I}^\textup{tot}}(s|r_j)= e^{-s\sigma_{noise}^2} \mathbb{E}[e^{-s\mathcal{I}^\textup{tot}(r_j)} ]$. \\

\subsubsection{Characterization of the Interference}
\label{sec:charact_inter}
\color{\CommentsColor}
Each interfering tier $k$ is denoted $\Tilde{\Phi}_{k}^\text{BS}$ and is obtained by applying thinning operations of probabilities $q_k$. The interference $\mathcal{I}_{(k)}$ from the active BSs of the $k^{th}$ tier when the typical UE is served by a BS of tier $j$ is thus given by
\begin{equation}
\label{eq:LT_adapt}
    \mathcal{I}_{k}(r_{j}) = \sum_{\mathbf{x}_k \in \Tilde{\Phi}_{k}^\text{BS} \setminus b(\mathbf{0},r_{j})}
      \beta P_{k} g_{\mathbf{x}_k} \lVert \mathbf{x}_k \rVert^{-\alpha} \mathbbm{1}(r_k\leq R_\text{max}),
\end{equation}
\color{black}
in which $b(\mathbf{0},r_{j})$ is the disc centered at the typical user with no active BS interfering. The total interference $\mathcal{I}^\textup{tot}$ is thus obtained by summing the contributions from each tier. To further develop the success probability, the LT of the noise and interference term is given in Corollary \ref{cor:sec5_Lnoise}, while Lemma \ref{lem:sec5_LI} states the LT of the interference term alone.
\begin{lemma}
\label{lem:sec5_LI}
The LT of the total interference term, when the serving BS is from tier $j$ and located at a distance $r_j$ of the typical UE, is given by
\color{\CommentsColor}
\begin{align}
    \mathcal{L}_{\mathcal{I}^\textup{tot}}(s|r_j) 
    & = \mathcal{L}_{\mathcal{I}_{0}}(s|r_j) \prod_{k\in \mathcal{K}^\textup{BS}} \mathcal{L}_{\mathcal{I}_{k}}(s|r_j),
\end{align}\color{black}
\textcolor{\CommentsColor}{with the LT of the interference from subtier $\Tilde{\Phi}_{k}^\textup{BS}$ given by $\mathcal{L}_{\mathcal{I}_{k}}(s|r_j) = \exp\left(-\zeta_{k}(s|r_j) \right)$.} The expression of the term $\zeta_{k}(s|r_j)$ is given by
\begin{align*}
    \zeta_{k}(s|r_j) & = \frac{2\pi q_k \lambda_k^\textup{BS}}{M_{k}} \Bigg\{r_j^2 \left[\mathcal{J}_0\left(\frac{-s A_{k} r_j^{-\alpha}}{m} \right) -1 \right] \\
    & -R_\textup{max}^2 \left[\mathcal{J}_0\left(\frac{-s A_{k} R_\textup{max}^{-\alpha}}{m} \right) -1 \right]\Bigg\}, \numberthis 
\end{align*}
where $\mathcal{J}_k(x) \triangleq {}_3F_2\left(k+\frac{1}{2},k-\nu,k+m;k+1,k+1-\nu;x\right)$, with the generalized hypergeometric function defined as ${}_p F_q(a_1,...,a_p;b_1,...,b_q;z)$  and $\nu \triangleq \frac{2}{\alpha}$.

For the interference from tier $k=0$, if $j\neq 0$ and the typical UE is from a tier of category 2, then it is not served by its coupled BS. The LT of the interference from this BS is given by
 \color{\CommentsColor}
\begin{align*}
    \mathcal{L}_{\mathcal{I}_0}(s|r_j) & = (1-q_0) -  \frac{2 q_0}{M_{0}} \Bigg[ \mathcal{J}_0\left(\frac{-s A_{0} r_j^{-\alpha}}{m}\right) \\
     & r_j^2 - \mathcal{J}_0\left(\frac{-s A_{0} R_\textup{top}^{-\alpha}}{m}\right)R_\textup{top}^2 \Bigg], \numberthis
\end{align*}
for $r_j<R_\textup{top}$ with $R_\textup{top} = \min(r_{m,0},R_\textup{max})$. Otherwise, $\mathcal{L}_{\mathcal{I}_0}(s|r_j)=1$.
\color{black}
\end{lemma}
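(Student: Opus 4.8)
The plan is to evaluate $\mathcal{L}_{\mathcal{I}^\textup{tot}}(s|r_j)=\mathbb{E}[e^{-s\mathcal{I}^\textup{tot}(r_j)}]$ by first exploiting the independence structure of the model and then reducing each tier's contribution to a radial integral that collapses to the stated generalized hypergeometric form. Because $\mathcal{I}^\textup{tot}=\sum_k \mathcal{I}_k$ and the interfering processes $\Tilde{\Phi}_k^\text{BS}$ are mutually independent (independent HPPPs, independently thinned with retention probability $q_k$, carrying i.i.d. marks $|\rho|^2$ and $\theta$ that are independent across points and tiers), the Laplace transform of a sum of independent terms factorizes as $\mathcal{L}_{\mathcal{I}^\textup{tot}}=\mathcal{L}_{\mathcal{I}_0}\prod_{k\in\mathcal{K}^\textup{BS}}\mathcal{L}_{\mathcal{I}_k}$, which already isolates the anomalous single-point tier $0$. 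The generic tiers $k\in\mathcal{K}^\textup{BS}$ will be handled with the probability generating functional (PGFL) of a Poisson process, whereas tier $0$, being a single point with a random location, will be treated by a direct conditional expectation.

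For a fixed tier $k\in\mathcal{K}^\text{BS}$ I would apply the PGFL of a PPP of intensity $q_k\lambda_k^\textup{BS}$ to the marked points. Writing each contribution as $\beta P_k g_{\mathbf{x}_k}\lVert\mathbf{x}_k\rVert^{-\alpha}=A_k|\rho|^2 G_\text{cos}(\tfrac12\theta)\lVert\mathbf{x}_k\rVert^{-\alpha}$ (using $A_k=\beta M_k P_k$ and $g=M_k|\rho|^2 G_\text{cos}$), and restricting the domain to the annulus $r_j\le\lVert\mathbf{x}_k\rVert\le R_\text{max}$, the PGFL gives $\mathcal{L}_{\mathcal{I}_k}(s|r_j)=\exp(-\zeta_k(s|r_j))$ with
\begin{align*}
\zeta_k(s|r_j) & = 2\pi q_k\lambda_k^\textup{BS} \\
& \quad \times \int_{r_j}^{R_\textup{max}}\! \Big(1-\mathbb{E}_{|\rho|^2,\theta}\big[e^{-sA_k|\rho|^2 G_\text{cos}(\theta/2) r^{-\alpha}}\big]\Big)\, r\,\textup{d}r.
\end{align*}
I would carry out the fading average first: since $|\rho|^2\sim\textup{Gamma}(m,1/m)$, its Laplace transform yields $\mathbb{E}_{|\rho|^2}[e^{-t|\rho|^2}]=(1+t/m)^{-m}$, so the inner expectation becomes $\mathbb{E}_\theta[(1+sA_k G_\text{cos}(\theta/2)r^{-\alpha}/m)^{-m}]$. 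The angular average over $\theta\sim\textup{Unif}[-1,1]$ then splits into the beam support, where $G_\text{cos}(\theta/2)=\cos^2(\tfrac{\pi M_k}{4}\theta)\neq 0$, and its complement, where the gain vanishes and the integrand contributes nothing to $1-(\cdot)$; this restriction to a window of width proportional to $1/M_k$ is exactly what produces the $1/M_k$ prefactor in $\zeta_k$.

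The technical heart of the argument is evaluating the remaining double integral in closed form. After the substitution $t=\tfrac{\pi M_k}{4}\theta$, the angular part takes the form $\int_0^{\pi/2}(1+c\cos^2 t)^{-m}\,\textup{d}t$ with $c=sA_k r^{-\alpha}/m$, which is a Gauss hypergeometric in $c$; the subsequent radial integration against $r\,\textup{d}r$ over $[r_j,R_\text{max}]$ introduces the parameter $\nu=2/\alpha$ through the change of variable $r^2\leftrightarrow (r^{-\alpha})^{-\nu}$ and promotes the ${}_2F_1$ to the generalized hypergeometric $\mathcal{J}_0(x)={}_3F_2(\tfrac12,-\nu,m;1,1-\nu;x)$. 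I expect this step to be the main obstacle: one must recognize that the combined angular/radial integrand admits an antiderivative of the form $r^2[\mathcal{J}_0(-sA_k r^{-\alpha}/m)-1]$, so that the definite integral reduces to the stated difference evaluated at $r_j$ and $R_\text{max}$. This requires careful use of hypergeometric integral identities and bookkeeping of constants, with the $-1$ terms pairing naturally with the $\int 1\cdot r\,\textup{d}r$ part of the PGFL integrand.

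Finally, tier $0$ must be treated on its own because it is a single point rather than a stationary process. When the typical UE belongs to a category-2 tier and is served by some $j\neq 0$, the coupled BS at the cluster centre interferes only if it is awake. Conditioning on its activity produces a two-component mixture: with probability $1-q_0$ it sleeps, contributing the trivial factor $1$, and with probability $q_0$ it is active, in which case I average $e^{-sA_0|\rho|^2 G_\text{cos}(\theta/2)r_0^{-\alpha}}$ over the fading, the beam angle, and the distance $r_0$ using the MCP distance density of Lemma~\ref{lem:sec5_dist} (the $j=0$ case) on $[r_j,R_\textup{top}]$ with $R_\textup{top}=\min(r_{m,0},R_\text{max})$. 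The same angular/radial manipulation as above turns the active term into the two $\mathcal{J}_0$ contributions of the stated expression. In every remaining configuration — namely $j=0$ (the UE is served by its own coupled BS) or a typical UE from category $1$ or $3$, where $\Phi_0^\text{BS}=\emptyset$ — tier $0$ produces no interference and $\mathcal{L}_{\mathcal{I}_0}(s|r_j)=1$, which closes the product and completes the proof.
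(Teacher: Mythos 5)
Your proposal is correct and follows essentially the same route as the paper: factorization of the Laplace transform over independent tiers, the PGFL of the independently thinned HPPP of intensity $q_k\lambda_k^{\textup{BS}}$ combined with the Gamma fading and uniform beam-angle averages for $k\in\mathcal{K}^{\textup{BS}}$ (the paper delegates this computation to the cited reference \cite{7913628}), and a direct conditional expectation over the sleep state, fading, angle, and MCP distance density on $[r_j,R_{\textup{top}}]$ for the single coupled BS of tier $0$, with the degenerate cases handled identically. The only difference is that you spell out the hypergeometric reduction that the paper's appendix leaves to the reference.
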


\begin{proof}
See Appendix \ref{app_B}.
\end{proof}

\begin{corollary}
\label{cor:sec5_Lnoise}
The LT of the noise and interference term $\sigma_\textup{noise}^2 + \mathcal{I}^\textup{tot}$ is given by
\begin{equation}
    \mathcal{L}_{N,\mathcal{I}^\textup{tot}}(s|r_j) 
     = \mathcal{L}_{\mathcal{I}_0}(s|r_j) \mathcal{L}_{\mathcal{I}_{exp}}(s|r_j),
\end{equation}
where $\mathcal{L}_{\mathcal{I}_{exp}}(s|r_j)=\exp\left( \zeta_{exp}(s|r_j) \right)$ \\
with $\zeta_{exp}(s|r_j) \triangleq -s \sigma_\textup{noise}^2 - \sum_{k \in \mathcal{K}^\textup{BS}} \zeta_{k}(s|r_j).$
\end{corollary}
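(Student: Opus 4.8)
The plan is to obtain this corollary as a direct algebraic consequence of Lemma~\ref{lem:sec5_LI} together with the definition of the noise-and-interference Laplace transform. First I would recall that, as stated just after \eqref{eq:chap5_probsuc}, the transform of the full SINR denominator $\sigma_\textup{noise}^2+\mathcal{I}^\textup{tot}$ splits as $\mathcal{L}_{N,\mathcal{I}^\textup{tot}}(s|r_j)=e^{-s\sigma_\textup{noise}^2}\,\mathbb{E}[e^{-s\mathcal{I}^\textup{tot}(r_j)}]=e^{-s\sigma_\textup{noise}^2}\,\mathcal{L}_{\mathcal{I}^\textup{tot}}(s|r_j)$, since the noise power $\sigma_\textup{noise}^2$ enters the denominator as a deterministic additive constant whose Laplace transform is merely the prefactor $e^{-s\sigma_\textup{noise}^2}$.

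Next I would substitute the factorisation of $\mathcal{L}_{\mathcal{I}^\textup{tot}}(s|r_j)$ provided by Lemma~\ref{lem:sec5_LI}. Because every per-tier factor with $k\in\mathcal{K}^\textup{BS}$ is of the pure exponential form $\mathcal{L}_{\mathcal{I}_{k}}(s|r_j)=\exp(-\zeta_{k}(s|r_j))$, the product over these tiers collapses into a single exponential, $\prod_{k\in\mathcal{K}^\textup{BS}}\exp(-\zeta_{k}(s|r_j))=\exp(-\sum_{k\in\mathcal{K}^\textup{BS}}\zeta_{k}(s|r_j))$. Absorbing the deterministic noise prefactor into this exponent then gives $e^{-s\sigma_\textup{noise}^2}\prod_{k\in\mathcal{K}^\textup{BS}}\mathcal{L}_{\mathcal{I}_{k}}(s|r_j)=\exp(-s\sigma_\textup{noise}^2-\sum_{k\in\mathcal{K}^\textup{BS}}\zeta_{k}(s|r_j))=\exp(\zeta_{exp}(s|r_j))$, which is exactly $\mathcal{L}_{\mathcal{I}_{exp}}(s|r_j)$ by definition. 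Leaving the tier-$0$ factor $\mathcal{L}_{\mathcal{I}_0}(s|r_j)$ outside this aggregation yields $\mathcal{L}_{N,\mathcal{I}^\textup{tot}}(s|r_j)=\mathcal{L}_{\mathcal{I}_0}(s|r_j)\,\mathcal{L}_{\mathcal{I}_{exp}}(s|r_j)$, the claimed identity.

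I do not expect any genuine obstacle, as the statement is essentially a bookkeeping reorganisation of Lemma~\ref{lem:sec5_LI}. The only point requiring care is the decision to isolate the coupled tier-$0$ contribution: unlike the terms indexed by $\mathcal{K}^\textup{BS}$, the expression for $\mathcal{L}_{\mathcal{I}_0}$ in Lemma~\ref{lem:sec5_LI} is in general \emph{not} a pure exponential (it carries the $(1-q_0)$ thinning structure of the single coupled BS), so it cannot be folded into the additive exponent $\zeta_{exp}$; keeping it as a separate multiplicative factor is precisely what makes the compact form of the corollary valid. As a consistency check I would also verify that the sum defining $\zeta_{exp}$ runs over $\mathcal{K}^\textup{BS}$ and not over $\mathcal{K}_0^\textup{BS}$, so that it matches the index set of the product appearing in Lemma~\ref{lem:sec5_LI}.
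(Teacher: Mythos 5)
Your proof is correct and follows exactly the route the paper intends: the corollary is stated without a separate proof precisely because it is the immediate combination of the definition $\mathcal{L}_{N,\mathcal{I}^\textup{tot}}(s|r_j)=e^{-s\sigma_\textup{noise}^2}\,\mathbb{E}[e^{-s\mathcal{I}^\textup{tot}(r_j)}]$ with the factorisation in Lemma~\ref{lem:sec5_LI}, merging the noise prefactor and the exponential per-tier factors into $\exp(\zeta_{exp})$ while keeping the non-exponential tier-$0$ factor separate. Your remarks on why $\mathcal{L}_{\mathcal{I}_0}$ cannot be absorbed into the exponent and on the index set of the sum are accurate.
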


\subsubsection{Success Probability}
Using the lemmas from previous subsections, the probability of success stated in \eqref{eq:chap5_probsuc} can be developed. The expression when the typical UE is not from a tier of category 2 is stated in Theorem \ref{theo:sec5_psuc1}. For a typical UE from a tier of category 2, if $j=0$, then the same theorem gives the correct expression, while Theorem \ref{theo:sec5_psuc2} states the probability for $j \neq 0$. In the expression of the AAKCP metric defined in \eqref{eq:AAKCP_u}, the success probability can thus be computed using either Theorem \ref{theo:sec5_psuc1} or Theorem \ref{theo:sec5_psuc2}. 

\begin{figure*}[!t]
\normalsize
\setcounter{MYtempeqncnt}{\value{equation}}
\setcounter{equation}{31} 
\color{\CommentsColor}
\begin{align*}
c_l = & -\mathbbm{1}(l\leq1) (-1)^l \sigma_\textup{noise}^2 \frac{\tau m r_j^\alpha}{A_{j,n}} - \sum_{k \in \mathcal{K}^\textup{BS}} \mathbbm{1}(l=0) \frac{2\pi q_k \lambda_k^\textup{BS}}{\textcolor{\CommentsColor}{M_{k}} \left(R_\textup{max}^2-r_j^2 \right)^{-1}} - \sum_{k \in \mathcal{K}^\textup{BS}} \frac{4\sqrt{\pi} q_k \lambda_k^\textup{BS} \tau^l}{(l!)^2 \textcolor{\CommentsColor}{M_{k}}}  \\
& \frac{\Gamma(m+l)}{\Gamma(m)} \frac{\Gamma(l+\frac{1}{2})}{(2-\alpha l)} \left(\frac{A_{k}}{A_{j}}\right)^l \left[r_j^2 \mathcal{J}_l\left(-\tau \frac{A_{k}}{A_{j}}\right) - R_\textup{max}^2 \left(\frac{r_j}{R_\textup{max}} \right)^{\alpha l} \mathcal{J}_l\left(-\tau \frac{A_{k}}{A_{j}} \left(\frac{r_j}{R_\textup{max}}\right)^\alpha \right)\right]. \numberthis \label{eq:c_l}
\end{align*} 
\color{black} 
\setcounter{equation}{\value{MYtempeqncnt}}
\hrulefill
\vspace*{4pt}
\end{figure*}
\addtocounter{equation}{1}

\begin{theorem}
\label{theo:sec5_psuc1}
Let us consider a typical UE served by an active BS of tier $\Tilde{\Phi}_{j}^\textup{BS}$ located at distance $r_j$. This typical UE is not from a tier of category 2 or the typical UE is from a tier of category 2 but $j=0$. Then, the success probability $\mathbb{P}_\text{suc}^{j}\left(\tau,r_j\right)$ is given by the one norm of a matrix exponential. The matrix is a $m\times m$ Toeplitz matrix whose first column is defined by $[c_0, \cdots, c_{m-1}]$ with $c_l$ defined in \eqref{eq:c_l}.
\end{theorem}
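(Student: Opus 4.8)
The plan is to start from the truncated-derivative representation \eqref{eq:chap5_probsuc} of the success probability and reduce it, in the regime of this theorem, to a statement about the Taylor coefficients of a single exponential. The two hypotheses are exactly the two situations in which the serving BS carries no tier-$0$ interference: either the typical UE is not from category~2, in which case $\Phi_0^\textup{BS}=\emptyset$, or it is from category~2 but $j=0$, in which case the serving BS is itself the coupled tier-$0$ point. In both cases $\mathcal{L}_{\mathcal{I}_0}(s|r_j)=1$, so Corollary~\ref{cor:sec5_Lnoise} collapses to the pure exponential $\mathcal{L}_{N,\mathcal{I}^\textup{tot}}(s|r_j)=\mathcal{L}_{\mathcal{I}_{exp}}(s|r_j)=\exp(\zeta_{exp}(s|r_j))$. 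Writing $s_0\triangleq \tau m r_j^\alpha/A_j$, the prefactor in \eqref{eq:chap5_probsuc} becomes $(-s_0)^l/l!$, so that $\mathbb{P}_\textup{suc}^j(\tau,r_j)=\sum_{l=0}^{m-1}\frac{(-s_0)^l}{l!}\mathcal{L}^{(l)}(s_0)$ with $\mathcal{L}=\exp(\zeta_{exp})$.

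The key reformulation is to introduce the one-variable function $\phi(t)\triangleq\mathcal{L}(s_0(1-t))=\exp(G(t))$, with $G(t)\triangleq\zeta_{exp}(s_0(1-t)|r_j)$. By the chain rule $\phi^{(l)}(0)=(-s_0)^l\mathcal{L}^{(l)}(s_0)$, so the $l$-th summand above is precisely the $l$-th Taylor coefficient of $\phi$ at $t=0$, and $\mathbb{P}_\textup{suc}^j$ is the sum of the first $m$ of these coefficients. Setting $c_l\triangleq G^{(l)}(0)/l!=\frac{(-s_0)^l}{l!}\zeta_{exp}^{(l)}(s_0)$, these are exactly the Taylor coefficients of $G$, and the coefficients of $\phi=\exp(G)$ follow by exponentiating the formal series $G(t)=\sum_l c_l t^l$ modulo $t^m$. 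I would then invoke the algebra isomorphism between the truncated power-series ring $\mathbb{R}[t]/(t^m)$ and the commutative algebra of $m\times m$ lower-triangular Toeplitz matrices, under which multiplication by $G$ corresponds to the Toeplitz matrix $\mathbf{C}$ with first column $[c_0,\dots,c_{m-1}]^{\mathrm T}$; since this map is an algebra homomorphism it sends $\exp(G)$ to $\exp(\mathbf{C})$, whose first column is the coefficient vector of $\phi$. Summing that first column gives $\mathbb{P}_\textup{suc}^j$; as $\exp(\mathbf{C})$ inherits the lower-triangular Toeplitz structure, this first-column sum coincides with the matrix one-norm $\|\exp(\mathbf{C})\|_1$ stated in the theorem.

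It then remains to make $c_l$ explicit, i.e. to verify it equals \eqref{eq:c_l}. I would split $\zeta_{exp}=-s\sigma_\textup{noise}^2-\sum_k\zeta_k$ and handle each piece. The noise term is affine in $s$, so only $l\le1$ survive, producing the first term of \eqref{eq:c_l}; the constant ($-1$) parts of each $\zeta_k$ in Lemma~\ref{lem:sec5_LI} contribute only at $l=0$, producing the second term. The third, hypergeometric, contribution is the laborious part: it comes from differentiating $\mathcal{J}_0(-sA_k r_j^{-\alpha}/m)$ and $\mathcal{J}_0(-sA_k R_\textup{max}^{-\alpha}/m)$ $l$ times, evaluating at $s_0$ (using $s_0A_j/(m r_j^\alpha)=\tau$ to simplify the arguments), and applying the derivative rule $\frac{d^l}{dx^l}{}_pF_q(\mathbf{a};\mathbf{b};x)=\frac{\prod_i(a_i)_l}{\prod_j(b_j)_l}\,{}_pF_q(\mathbf{a}+l;\mathbf{b}+l;x)$ to re-express $\mathcal{J}_0$ as $\mathcal{J}_l$. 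The Pochhammer factor $(m)_l$ yields $\Gamma(m+l)/\Gamma(m)$, $(\tfrac12)_l$ yields $\Gamma(l+\tfrac12)$, and the telescoping ratio $(-\nu)_l/(1-\nu)_l$ with $\nu=2/\alpha$ collapses to $(2-\alpha l)^{-1}$, while the chain-rule scalings produce the powers $\tau^l(A_k/A_j)^l$ and the $r_j^2$ versus $R_\textup{max}^2$ split. Tracking these scalings and the contiguous shift of the ${}_3F_2$ parameters is where essentially all of the computation lives and is the main obstacle; by contrast the matrix step is a clean closed algebraic identity once the $c_l$ are in hand.
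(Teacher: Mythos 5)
Your proposal is correct and follows essentially the same route as the paper: in the stated regime $\mathcal{L}_{\mathcal{I}_0}(s|r_j)=1$, so the Laplace transform of noise plus interference is the pure exponential $\exp(\zeta_{exp})$, and the truncated sum of its scaled derivatives at $s_0=\tau m r_j^{\alpha}/A_j$ is then recognized as the one-norm of the exponential of the lower-triangular Toeplitz matrix built from the Taylor coefficients $c_l$ of $\zeta_{exp}(s_0(1-t)|r_j)$, with the hypergeometric derivative rule supplying \eqref{eq:c_l}. The only difference is that where the paper delegates the Toeplitz-exponential step to \cite[Theorem 1]{7913628}, you re-derive it via the $\mathbb{R}[t]/(t^m)$ algebra isomorphism, which is a valid self-contained substitute for that citation.
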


\begin{proof}
If the typical UE is not from a tier of category 2 or if $j=0$, then $\mathcal{L}_{\mathcal{I}_0}(s|r_j)=1$. Therefore, the LT of the noise and interference term given in Corollary \ref{cor:sec5_Lnoise} is an exponential $\mathcal{L}_{N,\mathcal{I}^\textup{tot}}(s|r_j) = \exp[\zeta_{exp}(s|r_j)]$. We can use the result of \cite[Theorem 1]{7913628} for ad hoc networks by identifying $s=\frac{\tau m r_j^{\alpha}}{A_{j}}$, $\eta(s)$ with $\zeta_{exp}(s|r_j)$ and by computing the $l^{th}$ derivative of $\zeta_{exp}(s|r_j)$ which can be obtained from the $l^{th}$ derivative of a hypergeometric function.
\end{proof}

\begin{figure*}[!b]
\normalsize
\setcounter{MYtempeqncnt}{\value{equation}}
\setcounter{equation}{36} 
\hrulefill
\color{\CommentsColor}
\begin{align*}
\zeta_{k}^{(l)}(s;r_j) = \mathbbm{1}_{(l=0)} \frac{2\pi q_k \lambda_k^\textup{BS}}{M_{k}}\left(R_\textup{max}^2-r_j^2 \right)
        + \frac{4\sqrt{\pi}q_k \lambda_k^\textup{BS}}{l! M_{k}} \frac{\Gamma(m+l)}{\Gamma(m)} \frac{\Gamma(l+\frac{1}{2})}{(2-\alpha l)} \\
        \left(\frac{-A_{k}}{m}\right)^{l} \left[r_j^{2-\alpha l} \mathcal{J}_l\left(\frac{-sA_{k} r_j^{-\alpha}}{m}\right) - R_\textup{max}^{2-\alpha l} \mathcal{J}_l\left(\frac{-sA_{k} R_\textup{max}^{-\alpha}}{m}\right)\right].
 \numberthis \label{eq:zeta_l}
\end{align*}  
\color{black}
\setcounter{equation}{\value{MYtempeqncnt}}
\vspace*{4pt}
\end{figure*}

\begin{theorem}
\label{theo:sec5_psuc2}
Let us consider a typical UE served by an active BS of tier $\Tilde{\Phi}_{j}^\textup{BS}$ located at distance $r_j$. This typical UE is from a tier of category 2 and $j\neq0$. Then the success probability is given by
\begin{multline}
    \mathbb{P}_\text{suc}^{j}\left(\tau,r_j\right) = \sum_{l=0}^{m-1} \frac{(-\tau m  r_j^{\alpha})^l}{l! (A_{j})^l }\\
     \sum_{L=0}^{l} \binom{L}{k} \mathcal{L}_{\mathcal{I}_0}^{(l-L)}\left(\frac{\tau m r_j^{\alpha}}{A_{j}}\bigg|r_j \right) \mathcal{L}_{\mathcal{I}_{exp}}^{(L)}\left(\frac{\tau m r_j^{\alpha}}{A_{j}}\bigg|r_j \right),
\end{multline}
where:
\begin{itemize}
    \item the $l^{th}$ derivative of $\mathcal{L}_{\mathcal{I}_0}(s|r_j)$ is given by
    \color{\CommentsColor}
\begin{multline}
        \mathcal{L}_{\mathcal{I}_0}^{(l)}(s|r_j) = \mathbbm{1}_{(l = 0)} (1-q_0) - \frac{4 q_0}{l! \sqrt{\pi} M_{k}} \frac{\Gamma(m+l)}{\Gamma(m)} \\ 
        \frac{\Gamma(l+\frac{1}{2})}{(2-\alpha l)}
        \left(\frac{-A_{0}}{m}\right)^{l} \Bigg[r_j^{2-\alpha l} \mathcal{J}_l\left(\frac{-sA_{0} r_j^{-\alpha}}{m}\right) -\\
         R_\textup{top}^{2-\alpha l} \mathcal{J}_l\left(\frac{-sA_{0} R_\textup{top}^{-\alpha}}{m}\right)\Bigg],
        \end{multline}
        \color{black}
\item the $l^{th}$ derivative of $\mathcal{L}_{\mathcal{I}_{exp}}$ can be computed by recursion for $l\geq 1$ as follows:
        \begin{equation}
           \mathcal{L}_{\mathcal{I}_{exp}}^{(l)}(s|r_j) = \sum_{i=0}^{l-1} \binom{l-1}{i} \zeta_{exp}^{l-i}(s|r_j) \mathcal{L}_{exp}^{(i)}(s|r_j),
        \end{equation}
\item the $l^{th}$ derivative of the exponent term $\zeta_{exp}$ is given by
\color{\CommentsColor}
        \begin{equation}
            \zeta_{exp}^{(l)}(s|r_j) = -\mathbbm{1}_{(l\leq1)} \sigma_\textup{noise}^2 s^{1-l} - \sum_{k \in \mathcal{K}^\textup{BS}} \zeta_{k}^{(l)}(s|r_j),
        \end{equation}     
        \addtocounter{equation}{1}
        where $\zeta_{k,n'}^{(l)}(s|r_j)$ is defined in \eqref{eq:zeta_l} \color{black}
\end{itemize}
\end{theorem}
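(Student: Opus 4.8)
The plan is to start from the generic per-tier success probability \eqref{eq:chap5_probsuc} and to exploit the feature that distinguishes this case from Theorem~\ref{theo:sec5_psuc1}: when the typical UE belongs to a category-2 tier but is served by a BS with $j\neq 0$, the coupled tier-$0$ BS sitting at the cluster centre is an \emph{interferer} rather than the serving node, so its Laplace factor $\mathcal{L}_{\mathcal{I}_0}(s|r_j)$ is no longer equal to $1$. Evaluating \eqref{eq:chap5_probsuc} therefore requires the full $l^{th}$ derivative of the noise-plus-interference transform $\mathcal{L}_{N,\mathcal{I}^\textup{tot}}$ at $s=\tfrac{\tau m r_j^{\alpha}}{A_j}$, which by Corollary~\ref{cor:sec5_Lnoise} factors as the product $\mathcal{L}_{\mathcal{I}_0}\,\mathcal{L}_{\mathcal{I}_{exp}}$.

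First I would apply the general Leibniz rule to this product,
\begin{equation*}
\mathcal{L}_{N,\mathcal{I}^\textup{tot}}^{(l)}(s|r_j) = \sum_{L=0}^{l}\binom{l}{L}\,\mathcal{L}_{\mathcal{I}_0}^{(l-L)}(s|r_j)\,\mathcal{L}_{\mathcal{I}_{exp}}^{(L)}(s|r_j),
\end{equation*}
which, once inserted into \eqref{eq:chap5_probsuc}, immediately produces the announced double sum. It then only remains to supply closed forms for the two derivative sequences appearing as factors.

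For $\mathcal{L}_{\mathcal{I}_0}^{(l)}$, I would differentiate the single-BS expression of Lemma~\ref{lem:sec5_LI} directly. Because $\mathcal{I}_0$ originates from one thinned point rather than a Poisson field, $\mathcal{L}_{\mathcal{I}_0}$ is not an exponential but an affine-in-$q_0$ combination of a hypergeometric term: the constant $(1-q_0)$ survives only at $l=0$, while differentiating the ${}_3F_2$ factor $\mathcal{J}_0$ in its argument raises its parameter indices and, combined with the $s$-dependent power-law prefactors, reassembles into $\mathcal{J}_l$ with the Pochhammer/Gamma coefficients $\tfrac{\Gamma(m+l)}{\Gamma(m)}\tfrac{\Gamma(l+\frac12)}{2-\alpha l}$ shown in the statement. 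The sequence $\mathcal{L}_{\mathcal{I}_{exp}}^{(l)}$, being the derivatives of $\exp(\zeta_{exp})$, obeys the standard derivative-of-exponential recursion obtained by Leibniz-expanding $\tfrac{d}{ds}\exp(\zeta_{exp})=\zeta_{exp}'\exp(\zeta_{exp})$, which yields exactly the recursion over $\zeta_{exp}^{(l-i)}\mathcal{L}_{exp}^{(i)}$; linearity of $\zeta_{exp}$ in the per-tier exponents then reduces $\zeta_{exp}^{(l)}$ to the noise contribution (only $l\le 1$ surviving) plus the $\zeta_k^{(l)}$ given in \eqref{eq:zeta_l}.

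The main obstacle is the hypergeometric bookkeeping underlying the last step. One must verify that repeatedly applying the differentiation identity $\tfrac{d}{dz}\,{}_pF_q(\mathbf{a};\mathbf{b};z)=\tfrac{\prod a_i}{\prod b_j}\,{}_pF_q(\mathbf{a}+1;\mathbf{b}+1;z)$ to $\mathcal{J}_0$, while simultaneously differentiating the power factors $r_j^{2-\alpha l}$ and $R_\textup{max}^{2-\alpha l}$ (respectively $R_\textup{top}^{2-\alpha l}$ for the tier-$0$ term) through the chain rule in $s$, collapses precisely into the index-shifted $\mathcal{J}_l$ together with the stated rational and Gamma prefactors. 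Once this identity is pinned down for a single interfering tier, it propagates through the linear structure of $\zeta_{exp}$ and through the two Leibniz expansions without further difficulty, completing the derivation.
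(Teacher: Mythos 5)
Your proposal matches the paper's own argument: the paper likewise observes that $\mathcal{L}_{N,\mathcal{I}^\textup{tot}}$ is no longer a pure exponential because of the single-point tier-$0$ factor, and obtains the double sum by applying Leibniz's general derivation rule to the product $\mathcal{L}_{\mathcal{I}_0}\mathcal{L}_{\mathcal{I}_{exp}}$ evaluated at $s=\tfrac{\tau m r_j^{\alpha}}{A_j}$, with the derivative sequences supplied exactly as you describe (your $\binom{l}{L}$ is in fact the intended binomial in the theorem statement). You simply spell out the hypergeometric index-shifting and the exponential-derivative recursion in more detail than the paper's one-sentence proof.
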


\begin{proof}
In this situation, the LT of the noise and interference term cannot be written as an exponential. The success probability is thus not given by the one norm of an exponential matrix. The $l^{th}$ derivative of $\mathcal{L}_{N,\mathcal{I}^\textup{tot}}$ in \eqref{eq:chap5_probsuc} of lemma 8 must therefore be computed using Leibniz's general derivation rule with $s=\frac{\tau m r_j^\alpha}{A_{j}}$.
\end{proof}

\subsubsection{Coverage Probability}
One last term in the AAKCP metric is not yet defined. This term is denoted as $\mathbb{P}_\text{cov}\left(\tau \Big|\textcolor{\CommentsColor}{S_{\Phi_{k^{*}}}'}, \textcolor{\CommentsColor}{r_{k^{*}}} \right)$ and refers to the coverage probability of the network when the closest BS is sleeping. The distance to this sleeping BS is known and denoted as $\textcolor{\CommentsColor}{r_{k^{*}}}$. It results in an exclusion region of serving BSs around the typical UE. \color{\CommentsColor}In order to calculate this coverage probability, the expression is split in a sum of per-tier $\Tilde{\Phi}_{j}^\text{BS}$ coverage probability:
\begin{equation}
   \label{eq:chap5_covprob}
    \mathbb{P}_\text{cov}\left(\tau \Big|\textcolor{\CommentsColor}{S_{\Phi_{k^{*}}}'}, r_{k^{*}} \right) = 
    \sum_{j\in\mathcal{K}_0^\text{BS}} \mathbb{P}_\text{cov}^{j}\left(\tau \Big| \textcolor{\CommentsColor}{S_{\Phi_{k^{*}}}'}, r_{k^{*}} \right)
\end{equation}
The coverage probability of subtier $j\in \mathcal{K}_0^\text{BS}$  must be weighted by the probability of being connected to this tier $\Tilde{\Phi}_{j}^\text{BS}$. The corresponding event is denoted as $S_{\Phi_{j}}$ so $\mathbb{P}(S_{\Phi_{j}}|r_c)$ represents the probability to be connected to subtier $\Tilde{\Phi}_{j}^\text{BS}$, knowing the distance $r_c$ to the closest awake BS from this tier. This probability is stated in Lemma \ref{lem:sec5_Pphi}. Then, Theorem \ref{theo:sec5_probcov} states the expression of the per-subtier coverage probabilities. 

\begin{lemma}
\label{lem:sec5_Pphi}
The probability to be connected to a given subtier, knowing the distance $r_c$ to the closest awake BS from this subtier, is given by
\begin{equation}
    \mathbb{P}(S_{\Phi_{j}}|r_c,\textcolor{\CommentsColor}{S_{\Phi_{k^{*}}}'}, r_{k^{*}}) =  \overline{F}_0
    \prod_{k\in \mathcal{K}^\textup{BS}} \overline{F}_{k}  \numberthis \label{eq:lemma13}
\end{equation}
with:
\begin{align*}
& \overline{F}_{k} =\begin{cases}
  \multicolumn{1}{@{}c@{\quad}}{1} & \text{ for } j = k, \\ 
  \multicolumn{1}{@{}c@{\quad}}{\overline{F}_{\Tilde{R}_{k}}(r_c|r_{k^{*}})} & \text{ otherwise }, 
\end{cases} \numberthis\\
& \overline{F}_{0} =\begin{cases}
  ~0, \text{ for } k^*=0,j=0 \\
  ~1, \text{ for } k^*=0, j\neq0, \\
  q_0 ~\mathbbm{1}_{(r_c\leq r_{m,0})}, \text{ for } k^*\neq0, j = 0, \\  
  (1-q_0) + \overline{F}_{R_{0}}(r_c|r_{k^{*}})q_0, \text{ otherwise.}
  \end{cases} \numberthis
\end{align*}
\end{lemma}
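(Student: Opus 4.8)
The plan is to exploit the nearest-awake-BS association rule together with the mutual independence of the thinned tier processes. Under open access with connection to the nearest non-sleeping BS, the event $S_{\Phi_{j}}$ that the typical UE is served by subtier $\Tilde{\Phi}_{j}^\textup{BS}$, given that its closest awake BS in that subtier sits at distance $r_c$, is exactly the event that the closest awake BS of every other subtier lies farther than $r_c$. First I would write $S_{\Phi_{j}}=\bigcap_{k\neq j}\{\Tilde{R}_{k}>r_c\}$, where $\Tilde{R}_{k}$ denotes the distance to the closest awake BS of tier $k$. The conditioning on $S_{\Phi_{k^{*}}}'$ and $r_{k^{*}}$ enters only through a common exclusion region of radius $r_{k^{*}}$: since the sleeping nearest BS is the overall closest one, no BS of any tier lies strictly within $r_{k^{*}}$.

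Next I would invoke independence. Because each ordinary tier is an independent HPPP and the sleeping indicators are independent Bernoulli thinnings of probability $q_k$, the processes $\Tilde{\Phi}_{k}^\textup{BS}$ remain mutually independent given the exclusion radius $r_{k^{*}}$. Hence the probability factorizes:
\[
\mathbb{P}(S_{\Phi_{j}}\mid r_c,S_{\Phi_{k^{*}}}',r_{k^{*}}) = \prod_{k\in\mathcal{K}_0^\textup{BS}\setminus\{j\}}\mathbb{P}(\Tilde{R}_{k}>r_c\mid r_{k^{*}}).
\]
For an ordinary tier $k\in\mathcal{K}^\textup{BS}$ with $k\neq j$, each factor is the void-probability CCDF $\overline{F}_{\Tilde{R}_{k}}(r_c\mid r_{k^{*}})$ obtained from Lemma~\ref{lem:sec5_dist} with $\lambda_k^\textup{BS}$ replaced by $q_k\lambda_k^\textup{BS}$ through independent thinning, while for $k=j$ the factor is trivially $1$. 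This reproduces the stated definition of $\overline{F}_{k}$.

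The tier-$0$ factor requires separate treatment, and this is where the main difficulty lies, since $\Tilde{\Phi}_{0}^\textup{BS}$ contains only the single coupled BS rather than a Poisson field, so its contribution cannot be read off from a void probability. I would enumerate the four configurations explicitly: (i) $k^{*}=0,\,j=0$ is impossible for a served connection, because the unique tier-$0$ BS is precisely the sleeping nearest BS, forcing the factor $0$; (ii) $k^{*}=0,\,j\neq0$ leaves no tier-$0$ blocker, as its only BS is asleep, giving factor $1$; (iii) $k^{*}\neq0,\,j=0$ requires the single tier-$0$ BS to be awake, which occurs with probability $q_0$, and to be a valid in-range server, contributing the cluster-radius indicator $\mathbbm{1}_{(r_c\leq r_{m,0})}$; and (iv) the remaining case treats tier~$0$ as an ordinary ``other'' tier, where its BS either sleeps (probability $1-q_0$, no blocking) or is awake (probability $q_0$) and must then exceed distance $r_c$, contributing $\overline{F}_{R_{0}}(r_c\mid r_{k^{*}})q_0$; adding these two mutually exclusive outcomes yields $(1-q_0)+\overline{F}_{R_{0}}(r_c\mid r_{k^{*}})q_0$.

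I expect the bookkeeping of tier~$0$ to be the principal obstacle. One must carefully track which of three distinct roles the single MCP-coupled BS plays in each configuration—the sleeping nearest BS, the candidate server, or a potential awake blocker—and reconcile the activation probability $q_0$ with the MCP serving-distance CCDF $\overline{F}_{R_{0}}$ supplied by Lemma~\ref{lem:sec5_dist}. By contrast, the Poisson factors follow routinely from void probabilities once the common exclusion radius $r_{k^{*}}$ has been identified, so the whole argument ultimately reduces to validating the four-way tier-$0$ case split and confirming that the conditioning preserves independence across tiers.
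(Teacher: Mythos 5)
Your proposal is correct and follows essentially the same route as the paper: connection to subtier $j$ holds iff every BS of every other subtier is either asleep or farther than $r_c$, the factors decouple by independence of the tiers (giving the thinned-process void probabilities $\overline{F}_{\Tilde{R}_{k}}$), and tier $0$ is handled by an explicit four-way case split on $(k^{*},j)$. Your partition of the tier-$0$ event into the mutually exclusive outcomes ``asleep'' and ``awake and farther'' is trivially equivalent to the paper's inclusion--exclusion phrasing, and your write-up is in fact more detailed than the paper's three-sentence proof.
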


\begin{proof}
The typical UE is connected to a given tier if all BSs from all other tiers are either asleep or further away from the typical UE than the serving BS at tier $j$. This probability is thus calculated for the tier $0$ as the sum of the probability of being asleep with the probability of being further away, minus the joint probability for each BS of each tier. For the other tiers, the CCDF of the distance to the subtiers can be used.
\end{proof}

\begin{theorem}
\label{theo:sec5_probcov}
The coverage probability from a tier $j$, weighted by the probability to be connected to this particular tier, when the closest BS is asleep and located at a distance $r_{k^{*}}$ is given by
\begin{align*}
\mathbb{P}_\text{cov}^{j}\left(\tau \Big| \textcolor{\CommentsColor}{S_{\Phi_{k^{*}}}'}, r_{k^{*}} \right) & = \int_{r_{k^{*}}}^{R_\textup{max}} \mathbb{P}(S_{\Phi_{j}}|r_c,\textcolor{\CommentsColor}{S_{\Phi_{k^{*}}}'}, r_{k^{*}})\\
& \mathbb{P}_\text{suc}^{j}(\tau,r_c)f_{\Tilde{R}_{j}}(r_c|r_{k^{*}})~\textup{d}r_c. \numberthis 
\end{align*}
\end{theorem}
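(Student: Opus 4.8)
The plan is to obtain this expression through the standard law-of-total-probability argument over the serving distance, assembling the three building blocks already established: the thinned distance density $f_{\Tilde{R}_{j}}$ from Lemma~\ref{lem:sec5_dist}, the connection probability $\mathbb{P}(S_{\Phi_{j}}|\cdots)$ from Lemma~\ref{lem:sec5_Pphi}, and the per-tier success probability $\mathbb{P}_\text{suc}^{j}(\tau,r_c)$ from Theorems~\ref{theo:sec5_psuc1} and~\ref{theo:sec5_psuc2}.

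First, since the closest BS (of tier $k^{*}$, at distance $r_{k^{*}}$) is asleep, the event $S_{\Phi_{k^{*}}}'$ forces every BS of every tier to lie at distance at least $r_{k^{*}}$ from the origin, and the typical UE is served by the closest awake BS, which therefore lies in the annulus $[r_{k^{*}},R_\textup{max}]$, the upper bound being the LoS-ball radius beyond which no BS contributes. I would condition on the serving BS belonging to subtier $\Tilde{\Phi}_{j}^\textup{BS}$ and on its distance $r_c$.

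Second, for the serving BS to be the awake BS of subtier $j$ nearest to the origin, two independent events must hold. Subtier $j$ must possess an awake BS at distance $r_c$ with no closer awake BS of the same subtier, which is captured by the thinned density $f_{\Tilde{R}_{j}}(r_c|r_{k^{*}})$ whose exclusion radius $r_{k^{*}}$ encodes the sleeping-closest-BS constraint. Simultaneously, every other subtier must have its closest awake BS beyond $r_c$ (or be entirely asleep for tier $0$); by the independence of the underlying Poisson processes and the independent $q_j$-thinning, this factorizes into the product of CCDFs of Lemma~\ref{lem:sec5_Pphi}, that is, exactly $\mathbb{P}(S_{\Phi_{j}}|r_c,S_{\Phi_{k^{*}}}',r_{k^{*}})$. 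Conditioned on the serving BS being in subtier $j$ at distance $r_c$, the coverage event depends only on $r_c$ through~\eqref{eq:chap5_SINR}, so the conditional success probability is precisely $\mathbb{P}_\text{suc}^{j}(\tau,r_c)$. Multiplying the three factors and integrating $r_c$ over $[r_{k^{*}},R_\textup{max}]$ then yields the stated expression.

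The main obstacle will be rigorously justifying the factorization in the second step, namely that, conditioned on the serving distance $r_c$, the connection event and the interference entering $\mathbb{P}_\text{suc}^{j}$ are consistently decoupled. The subtlety is that the interference in $\mathbb{P}_\text{suc}^{j}$ already integrates over all awake interferers beyond $r_c$ in every tier, whereas the connection event constrains only the minimum awake distance per tier; these remain compatible because both impose the same exclusion disk $b(\mathbf{0},r_c)$ for interferers and rely on the independence of distinct Poisson tiers together with the independent $q_j$-thinning, so that no double counting occurs.
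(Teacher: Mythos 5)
Your proposal is correct and follows essentially the same route as the paper, which simply states that the expression is the definition of the sub-tier coverage probability weighted by the connection probability and averaged over the serving distance via $f_{\Tilde{R}_{j}}(r_c|r_{k^{*}})$. Your additional discussion of the decoupling between the connection event and the interference in $\mathbb{P}_\text{suc}^{j}$ is a reasonable elaboration of the independence assumptions the paper relies on implicitly, but it does not constitute a different method.
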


\begin{proof}
This expression follows from the definition coverage probability of the sub-tier $\Tilde{\Phi}_{j}$ weighted by the probability to be connected to this sub-tier.
\end{proof}
\color{black}

\color{\CommentsColor}
\subsection{Asymptotic Expressions}
\label{sec:asymptotic}
This section derives asymptotic expressions for the probability of success stated in Theorem \ref{theo:sec5_psuc1} and \ref{theo:sec5_psuc2}. The special case of Rayleigh fading (i.e. $m=1$) and no blockage effect (i.e. $R_\textup{max} \to +\infty$) will be considered, which are the two assumptions made in the works of Table \ref{tab:SotA} that do not study mm-Wave. In this case, the probability of success expression stated in \eqref{eq:chap5_probsuc} simplifies to
\begin{equation}
\mathbb{P}_\text{suc}^{j}(\tau,r_j)=  \mathcal{L}_{N,\mathcal{I}^\textup{tot}}\left(\frac{\tau m r_j^{\alpha}}{A_{j}} ; r_j \right).
\end{equation} 
The probability of success for the asymptotic case is stated in Theorem \ref{theo:sec5_psuc_asymp}, using this definition.

\begin{figure*}[!t]
\normalsize
\setcounter{MYtempeqncnt}{\value{equation}}
\setcounter{equation}{44} 
\color{\CommentsColor}
\begin{align*}
\eta\left(\frac{\tau r_j^\alpha}{A_{j}}\bigg|r_j\right) 
& = \frac{-\tau r_j^\alpha \sigma_\textup{noise}^2}{A_{j}} - \sum_{k \in \mathcal{K}^\textup{BS}} \frac{2\pi q_k \lambda^\textup{BS}_k}{M_{k}} r_j^2 \left[\mathcal{F}\left(-\tau \frac{A_{k}}{A_{j}} \right)  -1\right] \numberthis \label{eq:eta}\\
& \overset{\alpha=4}{=} \frac{-\tau r_j^4 \sigma_\textup{noise}^2}{A_{j}} - \sum_{k \in \mathcal{K}^\textup{BS}} \frac{2\pi q_k \lambda^\textup{BS}_k}{M_{k}} r_j^2 \left[\sqrt{1+\tau \frac{A_{k}}{A_{j}}}-1\right] \numberthis \label{eq:eta_alpha}\\
\mathcal{L}_{\mathcal{I}_0}\left(\frac{\tau r_j^\alpha}{A_{j}}\bigg|r_j \right) 
& = (1-q_0) + \frac{2 q_0}{M_{0} (r_{m,0}^2-r_j^2)} \left[ r_{m,0}^2 \mathcal{F}\left(-\tau \frac{A_{0} r_j^\alpha}{A_{j} r_{m,0}^\alpha}\right) - r_j^2 \mathcal{F}\left(-\tau \frac{A_{0}}{A_{j}}\right)\right]\numberthis \label{eq:LT_0} \\
& \overset{\alpha=4}{=} (1-q_0) + \frac{2 q_0}{M_{0} (r_{m,0}^2-r_j^2)}  \left[r_{m,0}^2 \sqrt{1+\tau \frac{A_{0} r_j^4}{A_{j} r_{m,0}^4}} - r_j^2 \sqrt{1+\tau \frac{A_{0}}{A_{j}}}~\right] \numberthis \label{eq:LT_0_alpha} 
\end{align*} 
\color{black} 
\setcounter{equation}{\value{MYtempeqncnt}}
\hrulefill
\vspace*{4pt}
\end{figure*}

\begin{theorem}
\label{theo:sec5_psuc_asymp}
Consider a typical UE served by a BS of tier $\Tilde{\Phi}_{j}^\textup{BS}$, which is located at distance $r_j$. Assuming Rayleigh fading and no blockage effect, the probability of success is calculated as follows
\begin{equation}
\mathbb{P}_\text{suc}^{j}(\tau,r_j) 
= \mathcal{L}_{\mathcal{I}_0}\left(\frac{\tau r_j^\alpha}{A_{j}}\bigg|r_j \right) \exp\left[\eta\left(\frac{\tau r_j^\alpha}{A_{j}}\bigg|r_j\right) \right],
\end{equation}
where $\eta$ is given by \eqref{eq:eta} with $\mathcal{F}(x) \triangleq {}_2F_1\left(\frac{1}{2},-\nu;1-\nu;x\right)$. For the specific case in which the path loss exponent equals $\alpha=4$, the expression for $\eta$ simplifies to \eqref{eq:eta_alpha}. If $j \neq 0$, $r_{m,0}> r_j$ and the typical UE is from a tier of category 2 then $\mathcal{L}_{\mathcal{I}_0}$ is given by \eqref{eq:LT_0} and \eqref{eq:LT_0_alpha} for $\alpha=4$. Otherwise, $\mathcal{L}_{\mathcal{I}_0}=1$.
\end{theorem}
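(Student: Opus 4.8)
The plan is to specialise the general success-probability expression \eqref{eq:chap5_probsuc} to $m=1$ and then let $R_\textup{max}\to\infty$, reusing the Laplace-transform factorisation already established in Corollary \ref{cor:sec5_Lnoise} and Lemma \ref{lem:sec5_LI}. First I would set $m=1$ in \eqref{eq:chap5_probsuc}, which removes every term of the sum except $l=0$, so that $\mathbb{P}_\textup{suc}^{j}(\tau,r_j)=\mathcal{L}_{N,\mathcal{I}^\textup{tot}}(s|r_j)$ evaluated at $s=\tau r_j^\alpha/A_{j}$. By Corollary \ref{cor:sec5_Lnoise} this equals $\mathcal{L}_{\mathcal{I}_0}(s|r_j)\exp[\zeta_{exp}(s|r_j)]$, which already exhibits the product form claimed in the statement; it then remains to identify $\zeta_{exp}$ with $\eta$ and to simplify $\mathcal{L}_{\mathcal{I}_0}$.

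Next I would carry out the hypergeometric reduction. In the definition $\mathcal{J}_0(x)={}_3F_2(\tfrac12,-\nu,m;1,1-\nu;x)$, the upper parameter $m$ coincides with the lower parameter $1$ when $m=1$; cancelling this equal pair collapses the ${}_3F_2$ to ${}_2F_1(\tfrac12,-\nu;1-\nu;x)=\mathcal{F}(x)$. Substituting this into $\zeta_{k}$ from Lemma \ref{lem:sec5_LI} and using $s=\tau r_j^\alpha/A_{j}$ turns the arguments $-sA_{k}r_j^{-\alpha}/m$ and $-sA_{k}R_\textup{max}^{-\alpha}/m$ into $-\tau A_{k}/A_{j}$ and $-\tau (A_{k}/A_{j})(r_j/R_\textup{max})^\alpha$ respectively, while the noise term becomes $-s\sigma_\textup{noise}^2=-\tau r_j^\alpha\sigma_\textup{noise}^2/A_{j}$, matching the first term of $\eta$.

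The heart of the argument is the blockage-free limit. I would show that the $R_\textup{max}^2[\mathcal{J}_0(\cdot)-1]$ contribution in each $\zeta_{k}$ vanishes as $R_\textup{max}\to\infty$: since ${}_2F_1(a,b;c;x)-1=\tfrac{ab}{c}x+O(x^2)$ for small $x$, the bracket behaves like a constant times $(r_j/R_\textup{max})^\alpha$, so its product with $R_\textup{max}^2$ scales as $R_\textup{max}^{2-\alpha}\to 0$ for $\alpha>2$ (equivalently $\nu<1$, the condition already required for convergence). What survives is exactly $\tfrac{2\pi q_k\lambda_k^\textup{BS}}{M_{k}}r_j^2[\mathcal{F}(-\tau A_{k}/A_{j})-1]$, and summing over $k$ together with the noise term yields $\eta$ as in \eqref{eq:eta}, establishing $\zeta_{exp}\to\eta$. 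The same reasoning applied to $\mathcal{L}_{\mathcal{I}_0}$ from Lemma \ref{lem:sec5_LI} — with $m=1$, $R_\textup{top}=\min(r_{m,0},R_\textup{max})\to r_{m,0}$, the substitution of $s$, and the averaging of the single coupled-interferer transform over the conditional distance law $f_{R_0}(\cdot|r_j)=2r_0/(r_{m,0}^2-r_j^2)$ from \eqref{eq:dist_MCP} — produces \eqref{eq:LT_0}, whereas $\mathcal{L}_{\mathcal{I}_0}=1$ when the coupled tier is absent ($j=0$ or a non-category-2 user).

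Finally, for the special case $\alpha=4$ I would set $\nu=2/\alpha=\tfrac12$, so that the upper parameter $\tfrac12$ of $\mathcal{F}$ cancels the lower parameter $1-\nu=\tfrac12$ and ${}_2F_1(\tfrac12,-\tfrac12;\tfrac12;x)$ collapses to ${}_1F_0(-\tfrac12;;x)=(1-x)^{1/2}$. Replacing every $\mathcal{F}(-\tau A_{k}/A_{j})$ by $\sqrt{1+\tau A_{k}/A_{j}}$ and each $\mathcal{F}$ inside $\mathcal{L}_{\mathcal{I}_0}$ by the corresponding square root then gives \eqref{eq:eta_alpha} and \eqref{eq:LT_0_alpha}. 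The main obstacle I anticipate is the blockage-free limit itself: making the vanishing of the $R_\textup{max}$-dependent terms rigorous requires the small-argument asymptotics of ${}_2F_1$ together with a careful growth comparison, whereas the parameter cancellations that produce $\mathcal{F}$ and its $\alpha=4$ closed form are essentially bookkeeping once the appropriate contiguous-parameter identities are invoked.
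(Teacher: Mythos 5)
Your proposal is correct and follows essentially the same route as the paper: the paper's own (very terse) proof likewise redoes the Lemma \ref{lem:sec5_LI} development with $R_\textup{max}\to+\infty$ and obtains the $\alpha=4$ closed form by recognizing the series of $\sqrt{1-x}$ inside $\mathcal{F}(x)$, which is the same reduction you obtain via the contiguous-parameter cancellation ${}_2F_1(\tfrac12,-\tfrac12;\tfrac12;x)=(1-x)^{1/2}$. Your write-up simply makes explicit the steps the paper leaves implicit (the collapse of the sum at $m=1$, the ${}_3F_2\to{}_2F_1$ reduction, and the $R_\textup{max}^{2-\alpha}\to0$ estimate for $\alpha>2$).
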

\addtocounter{equation}{4}
\begin{proof}
The development is similar to Lemma \ref{lem:sec5_LI} by considering $R_\textup{max} \to +\infty$. The expression can be simplified for $\alpha=4$ by identifying the series expansion of $\sqrt{1-x}$ in the development of the Gaussian hypergeometric function $\mathcal{F}(x)$.
\end{proof}

The effect of the number of antennas $M_k$ and the transmitting power $P_k$ on the probability of success can be observed in \eqref{eq:eta_alpha} and \eqref{eq:LT_0_alpha}. Recalling $A_k = \beta M_k P_k$, it can be observed that the sole disparity amid the influence of the two parameters lies in the term $\frac{2}{M_k}$. On the one hand, doubling the amount of power transmitted by each tier will augment the success probability by doubling the Signal-to-Noise Ratio (SNR) without any effect on the interference. On the other hand, doubling the number of antennas will also augment the SNR by the same amount but will have a different effect of the interference. Interestingly, while the interference from independent tiers decreases when $M_k$ increases, the interference from tier 0 increases with $M_0$. 

Regarding the EE, the impact of $P_k$ and $M_k$ on the power consumption given in \eqref{eq:Power_consumed} is defined by the value of the parameters $\Delta_p$ and $P_a$, with typically, $P_a < \Delta_p$. Consequently, these factors need to be considered when designing real networks.

Furthermore, it is possible to deduce the existence of an optimal quality of service threshold, $\tau$, from the asymptotic expressions. As $\tau$ increases, the probability of success and, subsequently, the AAKCP metric decrease. However, in the expression for the EE given in \eqref{eq:EE}, the AAKCP multiplies the term $\log_2(1+\tau)$, which increases as $\tau$ increases. Therefore, a trade-off must be defined between the two effects.

\color{black}
\section{Numerical Results}
\label{sec:simu}
The validity of the analytical model for the load is first assessed through MC simulations. Then, the accuracy of the analytical expression of the
AAKCP metric is verified. Finally, the analytical expressions are used to study the benefit in term of EE of different strategies. For the simulations we consider a 2-tier HetNet and 3 tiers of UEs defined by:
\color{\CommentsColor}
\begin{itemize}
    \item an independent tier of BS with intensity $\lambda_1^\text{BS} = 10^{-4}~\text{m}^{-2}$;
    \item a UE-dependent tier of BS with intensity $\lambda_2^\text{BS} =  2.5\cdot10^{-5}~\text{m}^{-2}$; the dependent tier of UEs of category 2 is distributed as a MCP with $\overline{m}_2 = 40$ and $r_{m,2}=20~\text{m}$;
    \item an independent tier of UEs of category 1 distributed as a MCP with $\lambda_{p,3}^\text{UE}=10^{-4}$, $\overline{m}_3 = 10$ and $r_{m,3}=20~\text{m}$;
    \item an independent tier of UEs of category 3 with intensity  $\lambda_4^\text{UE}=10^{-3}~\text{m}^{-2}$.
\end{itemize}
\color{black}
Unless otherwise mentioned, the other parameters used for the simulation are given in Table \ref{tab:chap4_param_net}. \textcolor{\CommentsColor}{These parameters are chosen to match the power consumption model published by Ericsson in \cite{9013130}. They are defined for a network with 2 tiers of macro and street macro BSs in an urban scenario.}

Note that the same power parameters are considered for tiers $1$ and $2$. 

\color{\CommentsColor}
\begin{table}
\renewcommand{\arraystretch}{1.3}
\caption{Network parameters}
\label{tab:chap4_param_net}
\centering
\color{\CommentsColor}
\begin{tabular}{|c|c||c|c|}
\hline
Parameters & Values & Parameters & Values \\ \hline \hline
$m$ & $1$ & $R_\text{min},~R_\text{max}$& $1,~400~\text{m}$\\ \hline
$\beta, ~\alpha$& $1, ~4$ &$P_1, ~P_2$ & $43, ~43~\text{dBm}$    \\ \hline
$\sigma_{noise}^2$ & $3~10^{-2}$ &$M_1, ~M_2$ & $64, ~64$\\ \hline
$\tau$& $5~\text{dB}$ & $P_\text{stat}, ~P_\text{sleep}$       & $260,~75~\text{W}$ \\ \hline
$P_a, ~\Delta_p$& $1~\text{W}, ~4$ & $\chi, ~\frac{d}{v}$       & $2r_c, ~0.5$ \\ \hline
\end{tabular}
\end{table}
\color{black}

\subsection{Accuracy of the Analytical Expressions}
\begin{figure*}[!t]
\centering
\subfloat[Tier 1]{\includegraphics[width = 0.49\textwidth]{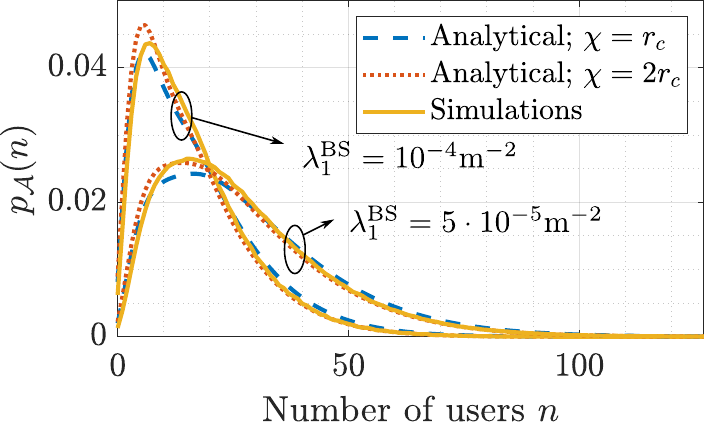}}
\label{fig:graphe1a}
\hfill
\subfloat[Tier 2]{\includegraphics[width = 0.49\textwidth]{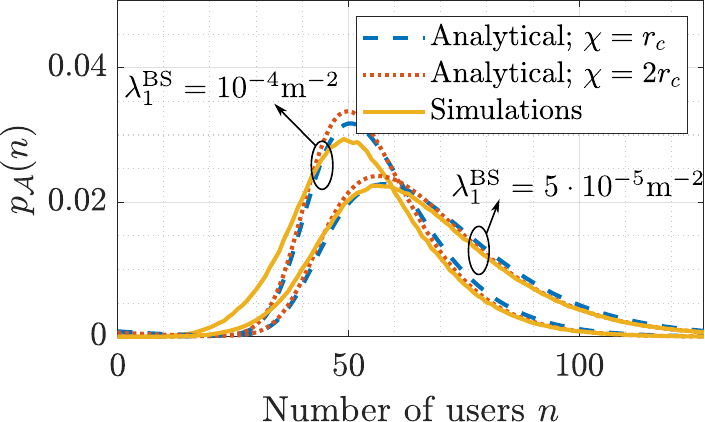}}
\label{fig:graphe1b}
\caption{\textcolor{\CommentsColor}{Accuracy of the load distribution for different densities of BSs $\lambda_1^\text{BS}$. On the left (resp. right), we can see the PMF for BSs without (resp. with) dependent users.}}
\label{fig:graphe1}
\end{figure*}

First, we can observe the accuracy of the PMF of the load for the BS of tier $1$ and $2$ respectively on \figurename \ref{fig:graphe1a} and \ref{fig:graphe1b}. The analytical expression is evaluated for two different values of the parameter $\chi$ which denotes the excluding distance to the cluster centers of category 2. With the circular Voronoï region assumption, there should not be any BS within a distance which is twice the radius of the circular Voronoï region (denoted as $r_c$). There should thus be an excluding distance $\chi=2r_c$ for the cluster centers of tier $2$. The circular region assumption is strong so it can be interesting to consider $\chi<2r_c$ to increase the accuracy of the distribution of the load for the BS of tier 1. However, in this paper, $\chi=2r_c$ is considered for the next simulations to be consistent with the circular region assumption.

\begin{figure}
\centering
\includegraphics[width=.99\linewidth]{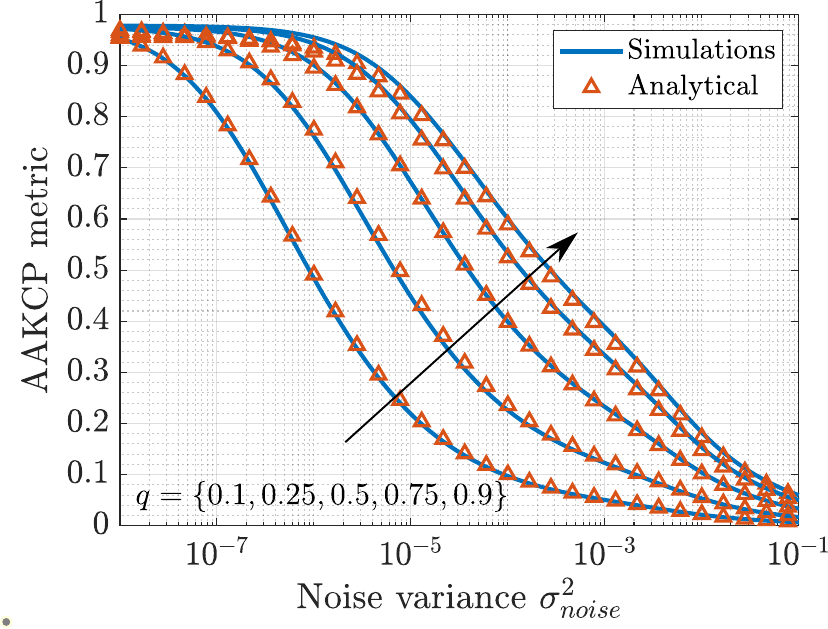}
\caption{\textcolor{\CommentsColor}{Accuracy of AAKCP metric for different ratios of active BSs.}}
\label{fig:graphe2}
\end{figure}

Next, \figurename \ref{fig:graphe2} confirms the accuracy of the AAKCP for an on/off power scheme, despite the numerous assumptions.
It can be observed that the coverage is higher when the ratio of active BSs increases. Indeed, when the ratio increases, the distance between the awake BSs and its served users decreases. The path loss hence decreases. Thanks to the use of directional arrays, the power of the useful signal increases faster than the power of the interference. Therefore, it results in an increasing coverage.

\color{\CommentsColor}
\subsection{Analysis of the System Parameters}
\color{black}

\begin{figure}
\centering
\includegraphics[width=.99\linewidth]{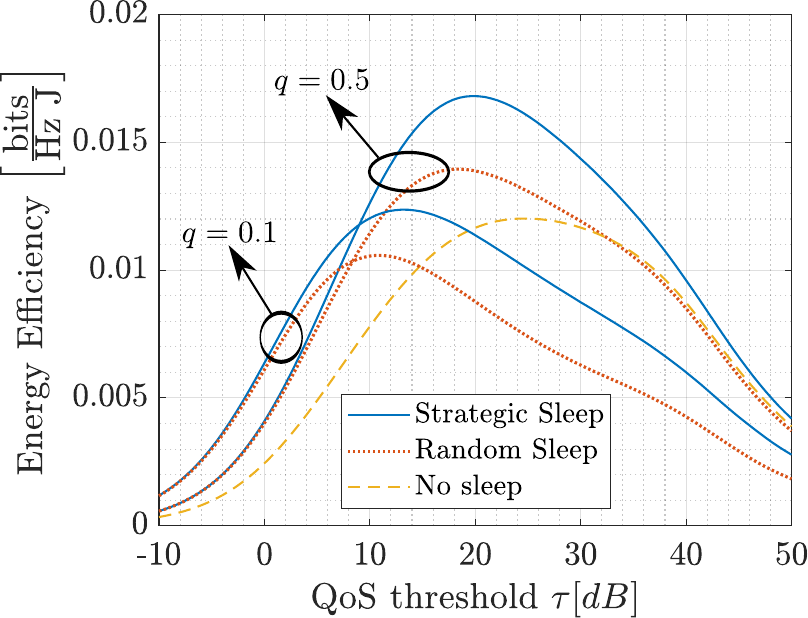}
\caption{\textcolor{\CommentsColor}{Energy efficiency for different sleeping strategies as a function of the QoS threshold $\tau$.}}
\label{fig:graphe3}
\end{figure}
\figurename \ref{fig:graphe3} illustrates the evolution of the EE with respect to the QoS threshold. Different types of strategies and sleeping ratios are compared. The optimal cell load dependent sleeping strategies outperform both RS and no sleep strategies in terms of EE. We can observe that the optimal QoS threshold decreases when the ratio of active BSs decreases. This results from a trade-off between the throughput and the network energy consumption. \textcolor{\CommentsColor}{Furthermore, simulations confirm the existence of an optimal QoS threshold discussed in section \ref{sec:asymptotic} based on the derived asymptotic expressions. As the ratio of active BSs decreases, the optimal QoS also decreases due to the AAKCP metric decreasing with the ratio.}
\begin{figure}
\centering
\includegraphics[width=.99\linewidth]{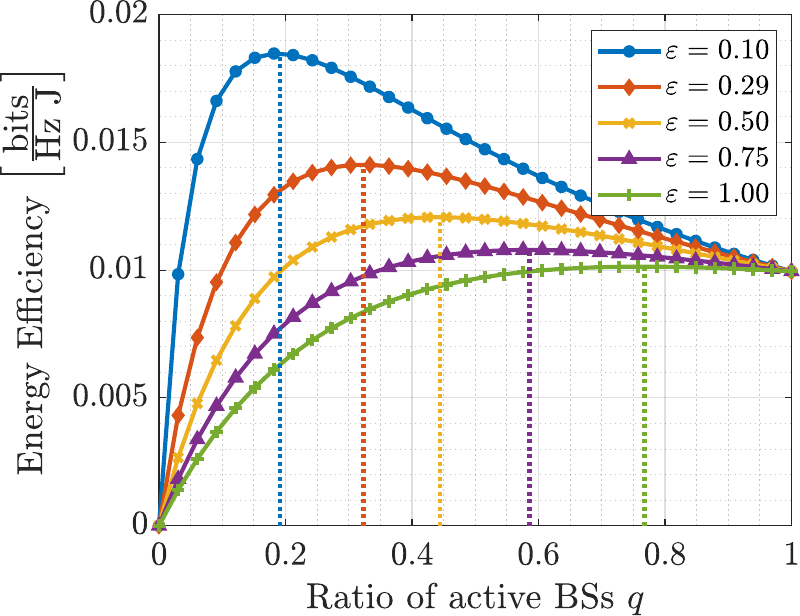}
\caption{\textcolor{\CommentsColor}{Energy efficiency for different ratios of power consumption $\epsilon = P_{sleep}/P_{stat}$ as a function of the ratio of active BSs.}}
\label{fig:graphe5}
\end{figure}
\color{\CommentsColor}

The effect on the optimal sleeping ratio of the power consumption ratio $\epsilon = P_{sleep}/P_{stat}$ can be observed on \figurename \ref{fig:graphe5}. The existence of this optimum ratio results from a balance between the decreasing distance to a serving BS and the lesser power consumption of sleeping BSs. The more efficient the power saving in sleep mode compare to the active static power consumption, the smaller the optimal ratio of active BSs. This shows that a network provider should take this ratio into account when designing its sleeping strategy. 

\begin{figure}
\centering
\includegraphics[width=.99\linewidth]{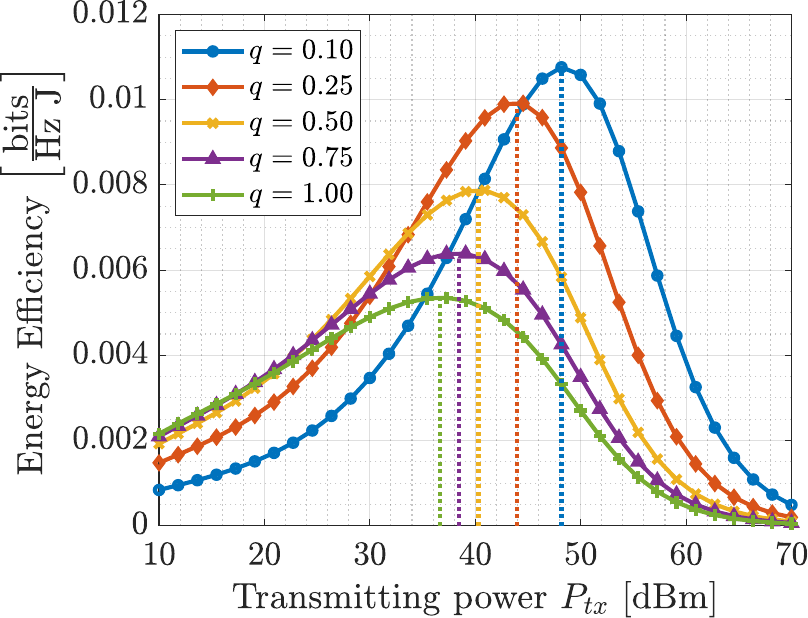}
\caption{\textcolor{\CommentsColor}{Energy efficiency for different ratios of active BSs as a function of the transmitted power.}}
\label{fig:graphe6}
\end{figure}

\figurename \ref{fig:graphe6} and \ref{fig:graphe7} depict the evolution of the EE with respect to  the transmitting power $P$ and the number of antennas $M$. As deduced from the discussion in subsection \ref{sec:asymptotic} based on the analytical expressions, the optimal value of these system parameters depends on the sleeping strategy. When the ratio of active BSs decreases, the optimal transmitting power increases to balance the loss in SNR happening due to the increasing distance to serving BSs. As expected, the evolution of the EE with respect to the number of transmitting antennas does not follow the same pattern. For low number of antennas, increasing them will increase the SNR significantly to increase the EE. Once the optimal number is reached, the gain in coverage is not significant enough to balance the increased power consumption.

\begin{figure}
\centering
\includegraphics[width=.99\linewidth]{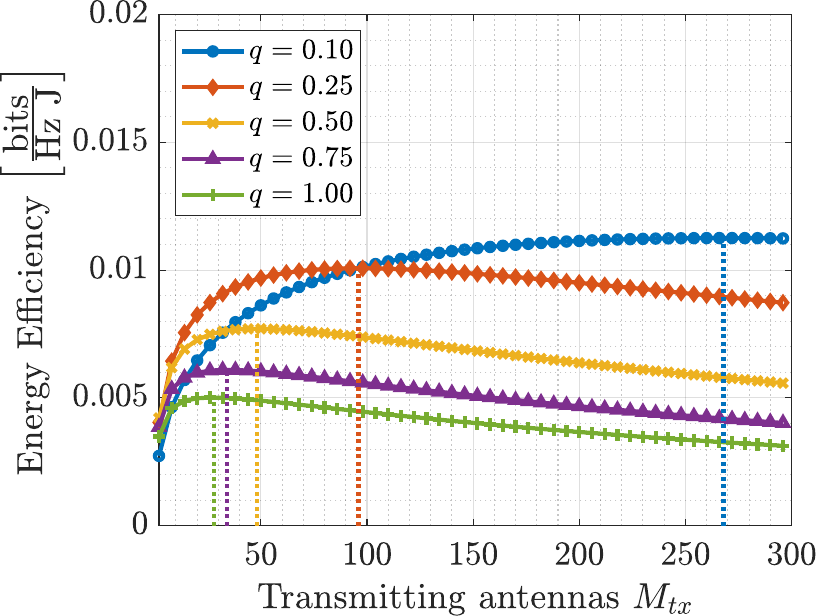}
\caption{\textcolor{\CommentsColor}{Energy efficiency for different ratios of active BSs as a function of the number of transmitting antennas.}}
\label{fig:graphe7}
\end{figure}

\color{black}

\section{Conclusion}
\label{sec:conclu}
\textcolor{\CommentsColor}{In this paper, a general and tractable framework to study the benefit of sleep mode strategies in wireless communication networks has been presented.} To this end, a new metric called the active K-tier coverage probability has been defined. To evaluate this metric, SG has been used to derive the PMF of the number of users per cell for HetNets with non-uniform user distribution. Analytical expressions for the AAKCP have been derived for on/off power control. Finally, MC simulations have been used to validate the accuracy of the obtained models which have then been used to evaluate the benefits of power control.

Possible extensions could include an average or instantaneous signal based association scheme such that UEs are able to associate with the BS providing the highest perceived average or instantaneous SINR. This requires adapting the definition of the activity to be coherent with the association scheme. \textcolor{\CommentsColor}{A potential second extension could involve expanding the system model to incorporate  coordination between BSs within the context of Coordinated MultiPoint (CoMP) transmission or cell-free massive MIMO. However, such an extension would heighten the complexity of the analytical expressions developed.}

\section*{Acknowledgments}
Computational resources have been provided by the Consortium des Équipements de Calcul Intensif (CÉCI), funded by the Fonds de la Recherche Scientifique de Belgique (F.R.S.-FNRS) under Grant No. 2.5020.11 and by the Walloon Region

\appendices
\section{} \label{app_A} 
The proof extends the development from \cite{9079449}. From its definition, the PGF for a PP $\Phi_u^\text{UE}$ of category 1 or 2 can be developed as follows:
\begin{align*}
    G_{\Psi_u(b(\mathbf{0},r_c))}(\theta|r_c) 
    & = \mathbb{E}\left[\theta^{\Psi_u(b(\mathbf{0},r_c))} \right] \\
    & = \mathbb{E}\left[\prod_{\mathbf{x}\in \Phi_u^\text{UE}} \theta^{\mathbbm{1}(\lVert \mathbf{x} \rVert \leq r_c)} \right]. \numberthis \label{eq:app_loadPCP1}
\end{align*}

Using the probability generating functional for a PCP given in \cite{8187697}, $G_{\Psi_u(b(\mathbf{0},r_c))}(\theta|r_c)$ can be written as follows:
\begin{align*}
    & G_{\Psi_u(b(\mathbf{0},r_c))}(\theta|r_c) = \exp\bigg(-\lambda_{p,u}^\text{UE} \int  1- \\
    & \exp\left(\overline{m}_u \left[\int \theta^{\mathbbm{1}(\lVert \mathbf{x}+\mathbf{y} \rVert \leq r_c)} f(\mathbf{x}) \textup{d}\mathbf{x} -1 \right] \right) \textup{d}\mathbf{y}\bigg), \numberthis \label{eq:app_loadPCP2}
\end{align*}
where $f$ is the PDF associated with the PCP (given in section \ref{sec:math_back}), $\mathbf{y}$ defines the position of the cluster center and $\mathbf{x}$ the position of the points. \eqref{eq:app_loadPCP2} can be further developed using polar coordinates such that $v=\lVert \mathbf{x} \rVert$ and $w=\lVert \mathbf{y} \rVert$:
\begin{align*}
   & G_{\Psi_u(b(\mathbf{0},r_c))}(\theta|r_c)
    = \exp\bigg(-2\pi \lambda_{p,u}^\text{UE} \int_{\chi}^{\infty} 1- \\
   & \exp\left[\overline{m}_u \int_{0}^{r_c} (\theta-1) f_d(v|w) \textup{d}v  \right]  w \textup{d}w\bigg), \numberthis \label{eq:app_loadPCP3}
\end{align*}
where $\chi$ is an excluding radius for the cluster centers of the PP and $f_d(v|w)$ is the PDF of the distance to the origin for a point of the PCP, knowing the distance between the corresponding cluster center and the origin. The expression of $f_d(v|w)$ for an MCP is stated in \cite[(3)]{9079449}.

For a PP of category 1, there is no constraint on the position of the cluster centers, so $\chi=0$. For a PP of category 2, there is one constraint as the radius of the circular Voronoï region $r_c$ is known and the users are distributed around a BS. Thus, there should be no BS within a distance $\chi = 2r_c$. However, as the circular Voronoï assumption is strong, we leave $\chi$ as a parameter of the system. Its optimal choice is discussed in Section \ref{sec:simu}. Finally, some simplifications in \eqref{eq:app_loadPCP3} finishes the proof.

\section{}\label{app_B}
In this proof, the expression of the LT of the total interference term $\mathcal{I}^\textup{tot}$ is developed. The interference from each tier being independent, the expression of the LT can be written as \textcolor{\CommentsColor}{
$\mathcal{L}_{\mathcal{I}^\textup{tot}}(s|r_j) = \prod_{k\in \mathcal{K}_0^\text{BS}} \mathcal{L}_{\mathcal{I}_{(k)}}(s|r_{j}),$
where $\mathcal{L}_{\mathcal{I}_{(k)}}$ is the LT of the interference from the tier $\Tilde{\Phi}_{k}^\text{BS}$.} This term is developed successively for $k\in \mathcal{K}^\text{BS}$ and $k=0$.

For $k\in \mathcal{K}^\text{BS}$, the position of the typical UE is completely independent from the one of the BSs of tier $k$. \textcolor{\CommentsColor}{Therefore, the position of the interfering BSs can be seen as an independent HPPP of intensity $\lambda_{\mathcal{I}_k}=q_k \lambda^\text{BS}_k$.} The LT for $k\in \mathcal{K}^\text{BS}$ can be obtained with a similar development as \cite[Appendix D]{7913628}.

For the interference from $k=0$, different cases have to be taken into account:
\begin{itemize}
    \item If the typical UE is not from a tier of category 2, there is no BS in tier $0$ so $\mathcal{L}_{\mathcal{I}_0}(s|r_j)=1$.
    \item If $j=0$, there is no interference from this term since the BS of tier $0$ is the serving one, hence $\mathcal{L}_{\mathcal{I}_0}(s|r_j)=1$ in this case as well. 
    \item If $j\neq 0$, there is interference from one BS which is the cluster center.
\end{itemize}

The LT of the interference term in the third case can be developed as follows:\color{\CommentsColor}
\begin{align*}
    \mathcal{L}_{\mathcal{I}_0}(s|r_j)
    & = (1-q_0) + q_0 \int_{r_j}^{R_\textup{top}} \\
    & \mathbb{E}_{g} \left[\exp\left(-s \beta g P_{0} r^{-\alpha} \right)\right] f_{\textcolor{\CommentsColor}{R_{0}}}(r|r_j) dr \numberthis
\end{align*}
\color{black}
where $\textcolor{\CommentsColor}{f_{R_{0}}}$ is given in \eqref{eq:dist_MCP}.

\bibliographystyle{IEEEtran}
\bibliography{IEEEabrv,References}

\begin{IEEEbiography}[{\includegraphics[width=1in,height=1.25in,clip,keepaspectratio]{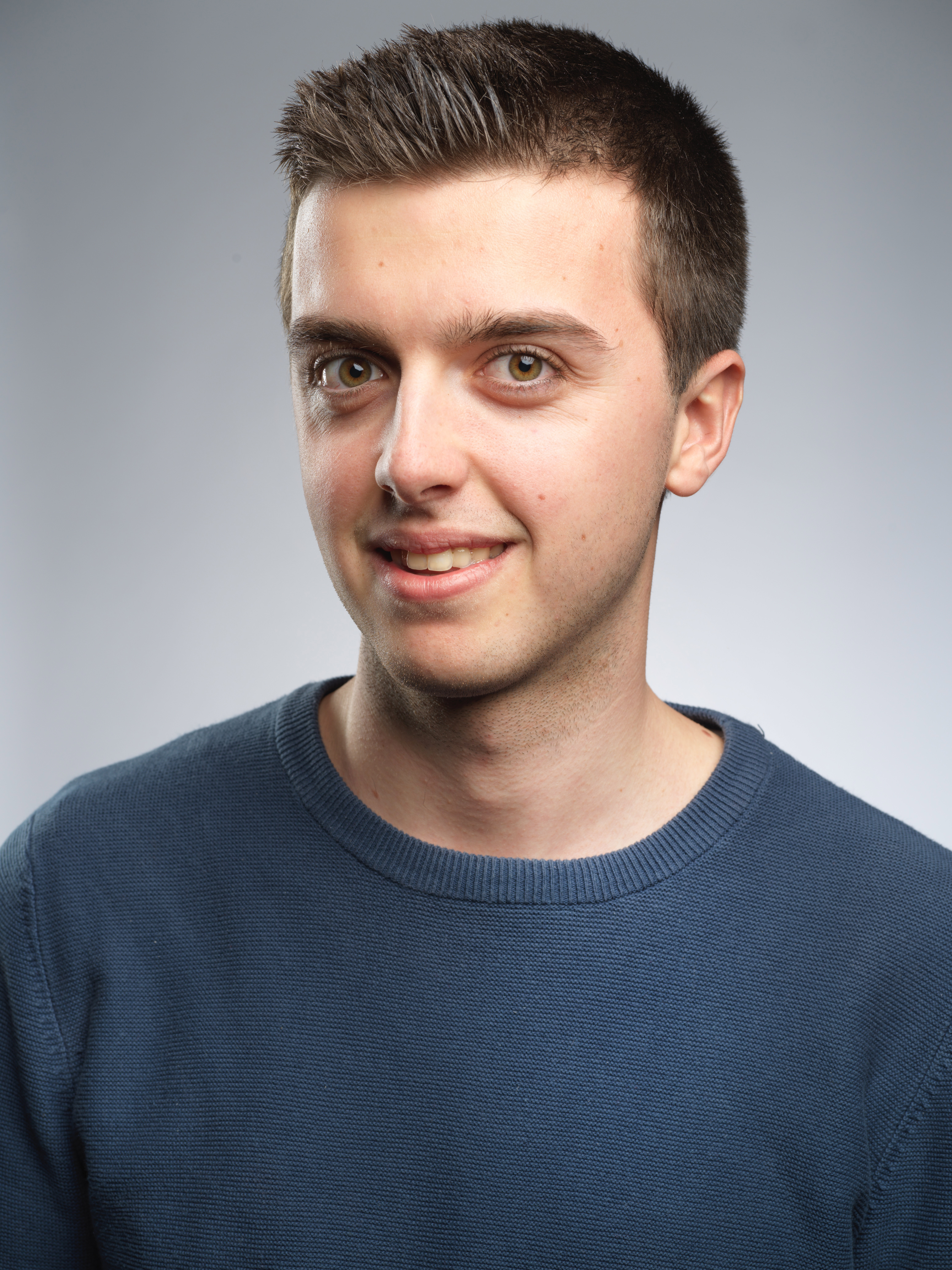}}]{Martin Willame}
(Member, IEEE) obtained his electrical engineering degree from the Université catholique de Louvain (UCLouvain), Louvain-la-Neuve, Belgium, in 2021. Since October 2021, he has been pursuing a Ph.D. at UCLouvain and ULB with a scholarship from F.N.R.S.-E.O.S. His research interests encompass sleep mode strategies, stochastic geometry, Wi-Fi-based passive radars, and multistatic systems. Additionally, in 2024, he commenced work on the European ECOsystem for greeN Electronic (EECONE) project, focusing on sustainability metrics.
\end{IEEEbiography}

\begin{IEEEbiography}[{\includegraphics[width=1in,height=1.25in,clip,keepaspectratio]{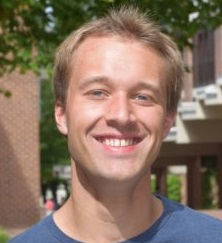}}]{Charles Wiame}
(Member, IEEE) earned his M.Sc. degree in electrical engineering from UCLouvain, Belgium, in 2017. As a Ph.D. student under the guidance of Prof. L. Vandendorpe and Prof. C. Oestges at UCLouvain, he successfully obtained his Ph.D. degree in 2023. His doctoral research focused on exploring the trade-offs between coverage and electromagnetic field (EMF) exposure in wireless systems, approached from a stochastic geometry perspective. Simultaneously, Charles served as a teaching assistant and lecturer at both the bachelor and master levels.
In 2022, he was visiting researcher in the lab of Prof. Emil Björnson at the Kungliga Tekniska högskolan (KTH), Stockholm, Sweden, where he studied cell-free massive Multiple Input Multiple Output (MIMO) systems.
After receiving a postdoctoral research fellowship from the Belgian American Education Foundation (BAEF), Charles joined the Research Laboratory of Electronics (RLE) at the Massachusetts Institute of Technology (MIT). He currently works in the Reliable Communications and Network Coding (NCRC) group, led by Prof. Muriel Médard. His ongoing research projects are dedicated to improvements in the GRAND decoder for wireless systems.

\end{IEEEbiography}

\begin{IEEEbiography}[{\includegraphics[width=1in,height=1.25in,clip,keepaspectratio]{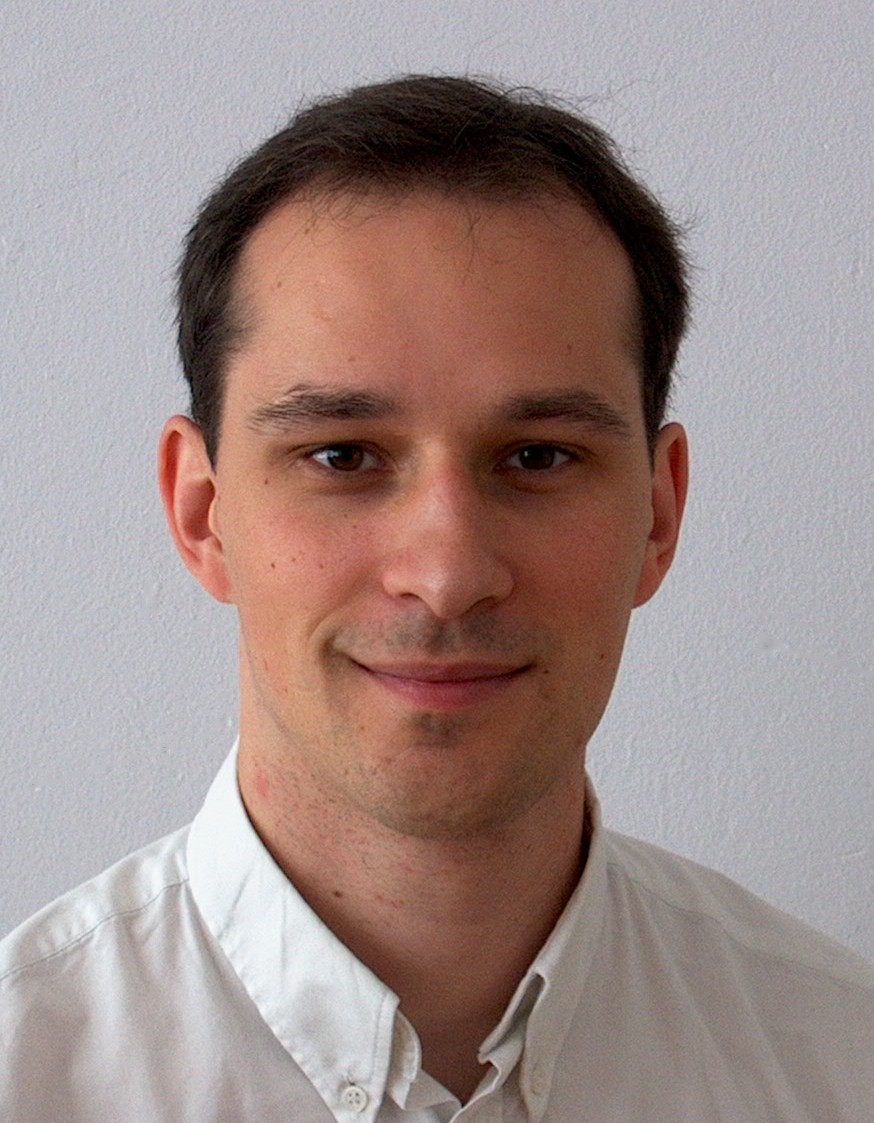}}]{Jérôme Louveaux}
(Fellow, IEEE) received the electrical engineering
degree and the Ph. D. degree from
the Université catholique de Louvain (UCLouvain),
Louvain-la-Neuve, Belgium in 1996 and 2000
respectively. From 2000 to 2001, he was a visiting
scholar in the Electrical Engineering department
at Stanford University, CA. From 2004
to 2005, he was a postdoctoral researcher at
the Delft University of Technology, Netherlands.
Since 2006, he has been a Professor in the
ICTEAM institute at UCL. His research interests
are in signal processing for digital communications, and in particular:
multicarrier modulations, xDSL systems, resource allocation, synchronization
and estimation.
\end{IEEEbiography}

\begin{IEEEbiography}[{\includegraphics[width=1in,height=1.25in,clip,keepaspectratio]{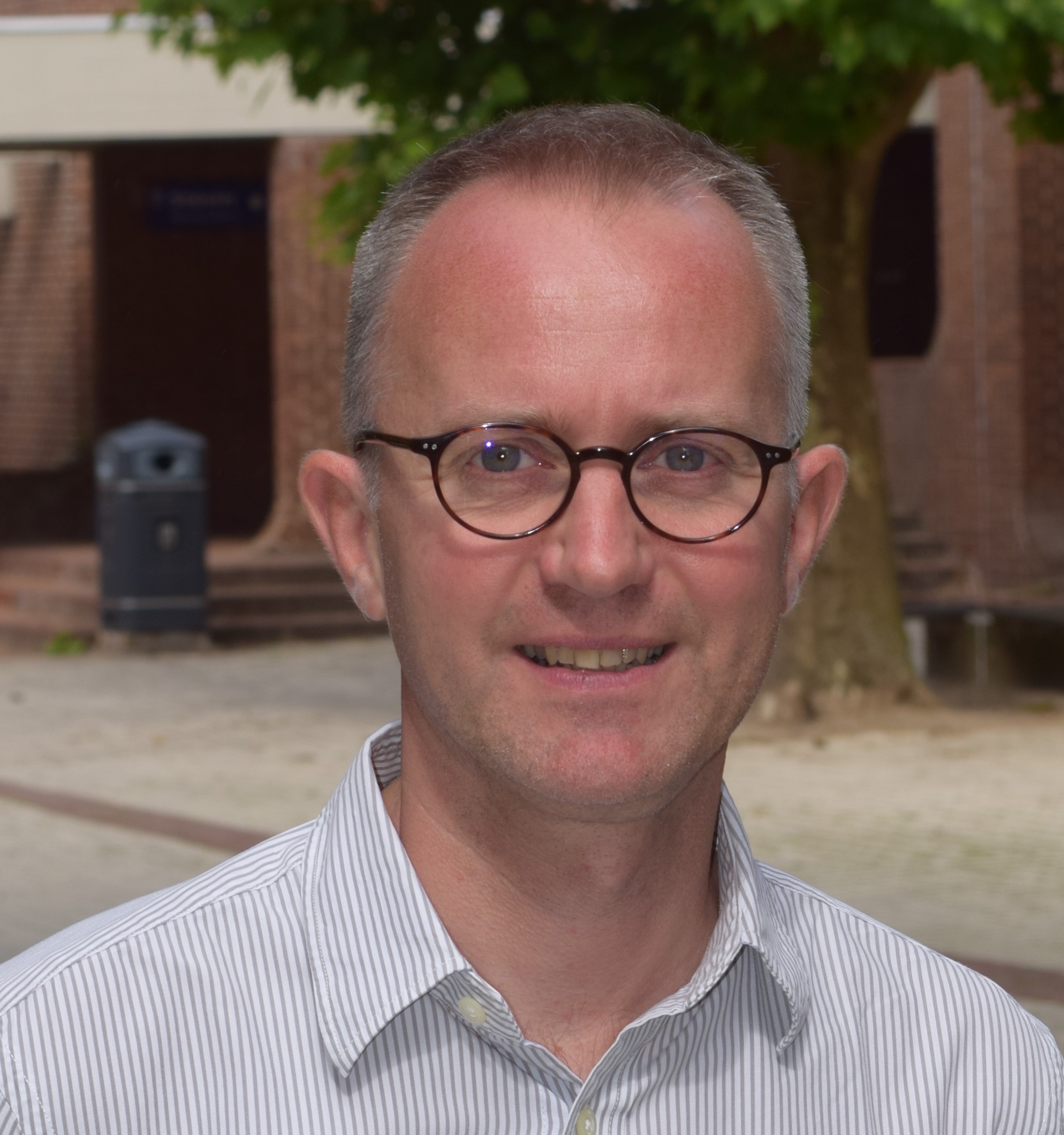}}]{Claude Oestges}
(Fellow, IEEE) received the M.Sc.
and Ph.D. degrees in electrical engineering from
the Université catholique de Louvain (UCLouvain),
Louvain-la-Neuve, Belgium, respectively, in 1996
and 2000. In January 2001, he joined as a Postdoctoral
Scholar the Smart Antennas Research Group (Information Systems Laboratory), Stanford University,
Stanford, CA, USA. From January 2002 to September
2005, he was associated with the Microwave Laboratory UCLouvain as a Post-doctoral Fellow of the
Belgian Fonds de la Recherche Scientifique (FRSFNRS). Claude Oestges is currently Full Professor with the Electrical Engineering Department, Institute for Information and Communication Technologies,
Electronics and Applied Mathematics (ICTEAM), UCLouvain. He was the
Chair of COST Action CA15104 IRACON (2016-2020). He is the author or
co-author of three books and more than 200 journal papers and conference
communications. He was the recipient of the 1999–2000 IET Marconi Premium
Award and of the IEEE Vehicular Technology Society Neal Shepherd Award in
2004 and 2012.
\end{IEEEbiography}

\begin{IEEEbiography}[{\includegraphics[width=1in,height=1.25in,clip,keepaspectratio]{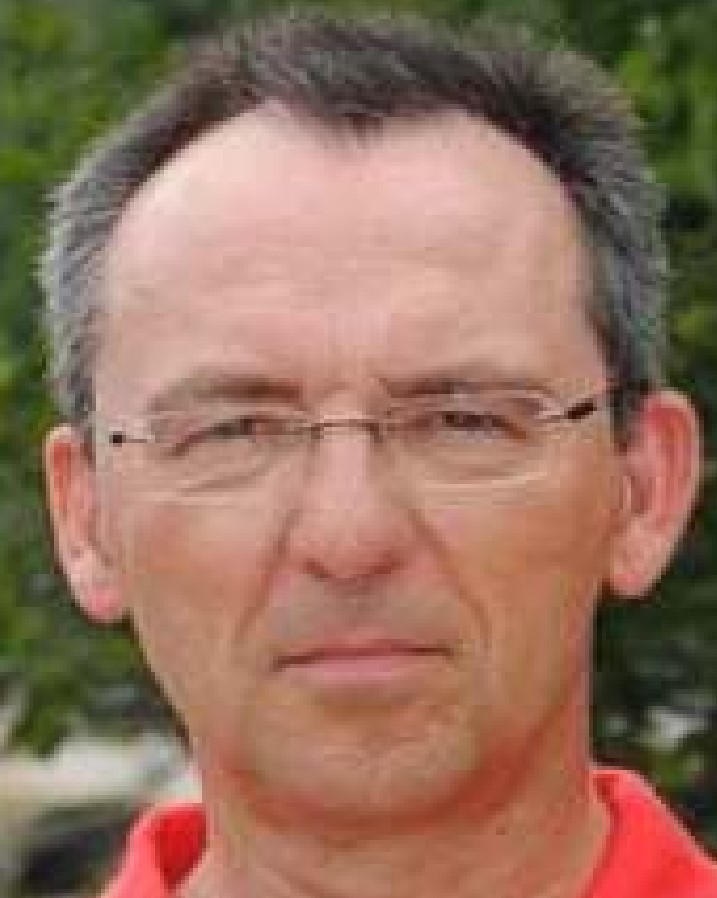}}]{Luc Vandendorpe}
(Fellow, IEEE) was born in
Mouscron, Belgium, in 1962. He received the degree
(summa cum laude) in electrical engineering and the
Ph.D. degree in applied science from the Catholic
University of Louvain (UCLouvain), Louvain La
Neuve, Belgium, in 1985 and 1991, respectively.
Since 1985, he has been with the Communications
and Remote Sensing Laboratory, UCLouvain, where
he first worked in the field of bit rate reduction
techniques for video coding. In 1992, he was a
Visiting Scientist and a Research Fellow with the
Telecommunications and Traffic Control Systems Group, Delft Technical
University, The Netherlands, where he worked on spread spectrum techniques
for personal communications systems. From October 1992 to August 1997,
he was a Senior Research Associate with the Belgian NSF, UCLouvain,
and an invited Assistant Professor. He is currently a Full Professor with
the Institute for Information and Communication Technologies, Electronics,
and Applied Mathematics, UCLouvain. His research interests are radar,
joint communication and radar systems, distributed massive MIMO, cell-free
networks, energy optimisation in IoT, game theory and localisation.
He is or has been a TPC Member of numerous IEEE conferences (VTC,
GLOBECOM, SPAWC, ICC, PIMRC, and WCNC). He was an Elected
Member of the Signal Processing for Communications Committee from 2000
to 2005 and the Sensor Array and Multichannel Signal Processing Committee
of the Signal Processing Society from 2006 to 2008 and from 2009 to 2011.
He was the Chair of the IEEE Benelux Joint Chapter on communications
and vehicular technology from 1999 to 2003. He was a Co-Technical Chair
for IEEE ICASSP 2006. He served as the Editor for synchronization and
equalization of IEEE Transactions on Communications from 2000 to 2002
and an Associate Editor for IEEE Transactions on Wireless Communications
from 2003 to 2005 and IEEE Transactions on Signal Processing from 2004
to 2006.
\end{IEEEbiography}

\vfill

\end{document}